\definecolor{Green}{rgb}{0.0, 0.5, 0.0}
\definecolor{Purple}{rgb}{128, 0, 128}
\renewcommand\footnotetextcopyrightpermission[1]{} 
\newcommand{\formatcomment}[1]{\scriptsize\textcolor{blue!25!black}{\texttt{#1}}}
\algrenewcommand{\algorithmiccomment}[1]{\hfill\parbox{5.2cm}{\formatcomment{//\,#1}}}
\algrenewcommand\algorithmicprocedure{\textbf{algorithm}}
\algnewcommand\algorithmicforeach{\textbf{for each}}
 \newcommand{\superscript}[1]{\ensuremath{{}^{\textrm{\scriptsize #1}}}}
 \newcommand{\mntext}[1]{\colorbox{SkyBlue}{\begin{color}{black}#1\end{color}}}
 \newcommand{\mn}[2][]{{\tiny\superscript{\mntext{\arabic{mn}}}}\marginpar{\scriptsize{
 			\ifthenelse{\isempty{#1}}
 			{\mntext{\parbox{0.95\marginparwidth}{\superscript{\arabic{mn}}~\raggedright{#2}}}}
 			{\mntext{\parbox{0.95\marginparwidth}{\superscript{\arabic{mn}}#1 says~:~\raggedright{#2}}}}
 		}}\stepcounter{mn}}
\theoremstyle{definition}
\newtheorem{definition}{Definition}[section]
\newtheorem{theorem}{Theorem}[section]
\newtheorem{corollary}{Corollary}[theorem]
\newtheorem{lemma}[theorem]{Lemma}
\theoremstyle{remark}
\newtheorem{claim}{Claim}[section]
\title{
%Equi-Joins over Encrypted Data without \\Super-Additive Leakage
Equi-Joins over Encrypted Data for Series of Queries
}
\date{}
\author{Masoumeh Shafieinejad}
\affiliation{%
  \institution{University of Waterloo}
%   \streetaddress{1 Th{\o}rv{\"a}ld Circle}
  \city{Waterloo}
  \country{Canada}}
\email{masoumeh@uwaterloo.ca}
\author{Suraj Gupta}
\affiliation{%
  \institution{University of Waterloo}
%   \streetaddress{1 Th{\o}rv{\"a}ld Circle}
  \city{Waterloo}
  \country{Canada}}
\email{sr3gupta@uwaterloo.ca}
\author{Jin Yang Liu}
\affiliation{%
  \institution{University of Waterloo}
%   \streetaddress{1 Th{\o}rv{\"a}ld Circle}
  \city{Waterloo}
  \country{Canada}}
\email{jy37liu@uwaterloo.ca}
\author{Koray Karabina}
\affiliation{%
  \institution{National Research Council Canada \& University of Waterloo}
%   \streetaddress{1 Th{\o}rv{\"a}ld Circle}
  \city{Waterloo}
  \country{Canada}}
\email{koray.karabina@nrc-cnrc.gc.ca}
\author{Florian Kerschbaum}
\affiliation{%
  \institution{University of Waterloo}
%   \streetaddress{1 Th{\o}rv{\"a}ld Circle}
  \city{Waterloo}
  \country{Canada}}
\email{fkerschb@uwaterloo.ca}
\begin{document}
\pagestyle{plain} % removes running headers
\begin{abstract}
Encryption provides a method to protect data outsourced to a DBMS provider, e.g., in the cloud.
However, performing database operations over encrypted data requires specialized encryption schemes that carefully balance security and performance.
In this paper, we present a new encryption scheme that can efficiently perform equi-joins over encrypted data with better security than the state-of-the-art.
In particular, our encryption scheme reduces the leakage to equality of rows that match a selection criterion and only reveals the transitive closure of the sum of the leakages of each query in a series of queries.
Our encryption scheme is provable secure. We implemented our encryption scheme and evaluated it over a dataset from the TPC-H benchmark.
\end{abstract}
\maketitle

\section{Introduction}\label{Sec:Intro}
Outsourcing data management into the cloud comes with new security risks.
Insiders at the cloud service provider, attackers seeking high-profit targets or international legislators may exploit access to the cloud infrastructure.
Encryption where the key is held at the client provides an additional layer of security countering these threats.
However, regular data management operations such as joins cannot be simply performed over encrypted data.
Hence, specialized encryption schemes for performing joins over encrypted data have been developed \cite{Joins,TCC17,bilinearjoins,CryptDB,pang2014privacy,kerschbaum2013optimal,bloomfilterjoin}.

%Joins are particularly challenging operations to implement over encrypted data (\KK{We may discard this first sentence entirely}).
In this paper we consider only equi-joins, since their security is particularly challenging.
On the one hand, a database management system (DBMS) cannot hide the equality of join attribute values, since the cross product of two tables of size $n$ each, is of size $n^2$ which is prohibitively large for subsequent operations.
Hence, the DBMS needs to select a subset of the cross product using the equality condition.
On the other hand, revealing, the equality condition leaks the frequency of items in a primary key, foreign key join.
This is critical, since primary key, foreign key joins are a very common operation and it has been demonstrated that frequency information is very powerful in cryptanalysis \cite{optimalfrequency,inference}.
The encryption scheme (for joins) by CryptDB \cite{CryptDB} has been effectively broken using this frequency information \cite{inference}.
Consequently, the challenge for any encryption scheme for joins is to allow selecting from the cross-product, yet reveal as few equality conditions as possible.

%\KK{\st{The} A}
A state-of-the-art encryption scheme for joins by Hahn et al.~\cite{Joins} further reduces the leakage from deterministic encryption \cite{SQLoED} and onion encryption \cite{CryptDB,AnalQoED} by only leaking equality condition for tuples that match a selection criterion.
However, the leakage of a series of queries in this encryption scheme corresponds to the leakage of the union of the queries, i.e., it may be larger than the union (sum) of the leakage of each query.
We provide an example of such super-additive leakage in Section~\ref{Sec:Problem}.
In this paper we aim to reduce the leakage of equality conditions even further.

We present a new encryption scheme for joins that not only restricts the leakage of the equality condition to tuples that match a selection criterion, but also limits the leakage of a series of queries corresponds to the transitive closure of the union of the leakage of each query, i.e., there is no super-additive leakage.
We believe that this leakage is a natural lower bound for the leakage of an efficient encryption scheme for non-interactive joins using one outsourced DBMS.
Note that oblivious joins \cite{sovereignjoins,obliviousquery,simeon} which only leak the size of the joint table, either require secure hardware or multi-party computation \cite{smcql}, and an interactive protocol that reveals the size of the joint table.

Our construction requires the use of a new cryptographic technique -- function-hiding inner product encryption -- compared to previous approaches.
Our construction is efficient with cryptographic operations requiring only a few milliseconds and the ability to run hash-based joins with expected time complexity $O(n)$.
Our construction works for arbitrary equi-joins.
As a comparison, the state-of-the-art encryption scheme by Hahn et al.~\cite{Joins} requires nested-loop joins (with $O(n^2)$ time complexity), and it only works for primary key, foreign key joins.
We implemented our encryption scheme and evaluated encrypted joins over a dataset from the TPC-H benchmark. %\KK{Please provide platform and software info.}

In summary, our contributions are as follows:
\begin{itemize}
    \item We provide a new encryption schemes for non-interactive equi-joins over encrypted data where a series of queries only leaks the transitive closure of the union of the leakage of each query, i.e., without super-additive leakage.
    \item We analyze the security of our scheme using a formal security proof.
    \item We evaluate the performance of a DBMS using our encryption scheme over a database from the TPC-H benchmark.
\end{itemize}

The remainder of the paper is structured as follows:
Section~\ref{Sec:Problem} describes the system model and problem in detail.
Section~\ref{Sec:Prelim} provides the necessary cryptographic background and Section~\ref{Sec:Protocol} describes our join encryption scheme.
We presents its security proof in Section~\ref{Sec:Security} and its performance evaluation in Section~\ref{Sec:Experiments}.
Section~\ref{Sec:RelatedWork} surveys related work and Section~\ref{Sec:Conclusions} summarizes our conclusions.
\section{System Model}\label{Sec:Problem}
%\subsection{Model}\label{Sec:Model}
In outsourced data management, a client stores their [sensitive] data on a database server under the control of a DBMS service provider~\citep{hacigumus2002providing}. Later, the client can access the outsourced data through an online query interface provided by the server. Clients desire to allow the server to process data queries while maintaining the confidentiality of the data. For this purpose, they encrypt data before outsourcing. However, encrypted data is hard to process. Therefore, to allow for  more  expressive server-side data processing, the client  will provide certain ``unlocking'' information (tokens) for a set of specific (equi-join) predicates. The clients expects the server to behave semi-honestly and perform exactly the considered query while trying to find out any additional information. We consider a relational model, where the client outsources their data in a number of (at least two) tables each consisting of several data columns (e.g., relational attributes). In this work we focus on performing equi-join over two outsourced tables $T_A$ and $T_B$. Without loss of generality, we assume both tables have $n$ rows and $m$ attributes, with each attribute taking its values from a domain of size $\ell$, to simplify the notations. The equi-join result on the two join columns of the table-pair $(T_A, T_B)$, is a subset of the cross-product of rows from the two tables that contain equal values in their join columns \citep{Joins}. Assume table $T_A$ with $|T_A|= n$ records, has schema $(A_0, A_1, \cdots, A_m)$ with join key $A_0$ that identifies the join column and other attributes $A_1, \cdots, A_m$. Each attribute $A_i$ has a domain $X_i$. We denote by $x_{i,j}$, $i \in [m]$ and $j \in [\ell]$ , the $j^{th}$ domain value in $X_i$. We show the rows in $T_A$ by tuples of variables: $(a_0^1, \cdots, a_m^1), \cdots, (a_0^n, \cdots, a_m^n)$. Similarly, $T_B$ has schema $(B_0, B_1, \cdots, B_n)$, with domain $Y_i$ for each $B_i$. We denote by $y_{i,j}$,$i \in [m]$ and $j \in [\ell]$, the $j^{th}$ value in $Y_i$. $T_B$ has $|T_B|=n$ rows shown by $(b_0^1, \cdots, b_m^1), \cdots, (b_0^n, \cdots, b_m^n)$. The equi-join with join attributes $A_0$ and $B_0$ is an operation on tables $T_A$ and $T_B$, denoted by\footnote{As used in \citep{Joins}} $T_A \bowtie T_B$. The result of $T_A \bowtie T_B$ has schema $(\Theta, A_1, \cdots, A_m, B_1, \cdots, B_m)$, and consists of records $(\theta^{r,r^{\prime}}, a_1^r, \cdots, a_m^r, b_1^{r^{\prime}}, \cdots, b_m^{r^{\prime}})$. The attribute $\Theta$ takes its values from all $A_0$'s and $B_0$'s that match, in other words: $\theta^{r,r^{\prime}} = a_0^r = b_o^{r^{\prime}}$, for all $r \in [n]$ , $r^{\prime} \in [n]$ where $a_0^r=b_0^{r^{\prime}}$ holds. There is a further filtering based on additional filtering-predicates chosen from  $\{A_1, \cdots, A_m\}$ and $\{B_1, \cdots, B_m\}$.

\subsection{Problem Description} %\Mas{a different title?}}
We describe the secure join problem through the following example. 
\begin{example}\label{ex:problem}
Consider $T_A$ and $T_B$ in Tables \ref{Tab:Teams} and \ref{Tab:Employee} containing employees information and their teams, respectively. Thus: $(A_0, A_1)$ = (Key, Name), and  $(B_0, B_1, B_2, B_3)$ = (Team, Record, Employee, Role). Assume the filtering-predicates are chosen over $A_1$ with domain $X_1=$ \{Web application, Database\} for $T_A$, and $B_3$ with domain $Y_3=$ \{Programmer, Tester\} for $T_B$. 
\begin{table}[h]
  \begin{tabular}{| c | c |}
    \hline
     Key & Name \\ \hline
     1 & Web Application  \\ \hline
     2 & Database \\ \hline
  \end{tabular}
  \caption{Teams}
  \label{Tab:Teams}
\end{table}
\begin{table}[h]
  \begin{tabular}{| c | c | c | c |}
    \hline
     Record & Employee & Role & Team \\ \hline
     1 & Hans & Programmer & 1 \\ \hline
     2 & Kaily & Tester & 1 \\ \hline
     3 & John & Programmer &  2\\ \hline
     4 & Sally & Tester &  2 \\ \hline
  \end{tabular}
  \caption{Employees}
  \label{Tab:Employee}
\end{table}
The equi-join of these two tables over the join keys $A_0 = $ Key and $B_0 = $ Team, includes four pairs $(a_0^r, b_0^{r^{\prime}})$ from row $r$ in $T_A$ and row $r^{\prime}$ in $T_B$, with true equality condition: $(a_0^1, b_0^1)$, $(a_0^1, b_0^2)$, $(a_0^2, b_0^3)$, $(a_0^2, b_0^4)$. Two additional equality pairs $(b_0^1, b_0^2)$, $(b_0^3, b_0^4)$ only from Table $T_B$ need to be included in order to complete the transitive closure.

We also emphasize that the uniqueness of the join attribute values in $A_0$ is a feature of Example \ref{ex:problem}, not a general requirement, since our scheme is not limited to joins between primary key,  foreign key joins. 

We consider three database operations (queries) at times $t_0 < t_1 < t_2$, i.e. $t_0$ is the point in time after encrypted database upload, $t_1$ is the point in time after the first query, but before the second query and $t_2$ is the point in time after the first and the second query.
\begin{itemize}
    \item[$t_0$:] \sf{Encrypted database upload.}
    \item[$t_1$:] \sf{SELECT $\star$ FROM Employees JOIN Teams ON Team = Key
    \emph{WHERE} Name = ``Web Application'' AND Role = ``Tester''}
    \item[$t_2$:] \sf{SELECT $\star$ FROM Employees JOIN Teams ON Team = Key
    \emph{WHERE} Name = ``Database'' AND Role = ``Programmer''}
\end{itemize}

\begin{table}[h]
  \begin{tabular}{| c | c | c | c | c |}
    \hline 
    Record& Employee & Role & T.Key & T.Name\\ \hline
    2 & Kaily & Tester & 1 & Web Application\\ \hline
  \end{tabular}
  \caption{The result of equi-join query at $t_1$}
  \label{Tab:Join1}
\end{table}

\begin{table}[h]
  \begin{tabular}{|c| c | c | c | c |}
    \hline 
    Record & Employee & Role & T.Key & T.Name\\ \hline
    3 & John & Programmer & 2 & Database\\ \hline
  \end{tabular}
  \caption{The result of equi-join query at $t_2$}
  \label{Tab:Join2}
\end{table}
In our analysis we compare encryption schemes for joins over encrypted data based on the (number of) pairs with true equality condition they reveal.
The results of the two queries are depicted in Tables \ref{Tab:Join1} and \ref{Tab:Join2}, respectively.
To compute those results efficiently the DBMS needs to reveal the equality condition of the pairs $(a_0^1, b_0^2)$ and $(a_0^2, b_0^3)$, i.e., this represents our minimum leakage and no efficient encryption scheme using non-interactive matches on a single DBMS can avoid this leakage.
\end{example}

The first proposal for database operations over encrypted data by Hacig\"um\"us et al.~\cite{SQLoED} used deterministic encryption \cite{deterministic1,deterministic2} to compute joins.
In deterministic encryption each data value is deterministically encrypted to the same ciphertext, such that the DBMS can compare the ciphertexts for an equi-join.
While this idea by itself has been shown to be insecure \cite{inference}, it is still the fundamental idea for subsequent schemes.
In our analysis, we can state that deterministic encryption reveals all six (equal) pairs at time $t_0$.

An improvement over deterministic encryption was presented by CryptDB \cite{CryptDB}.
CryptDB uses onion encryption and wraps each deterministic ciphertext in a probabilistic ciphertext.
Hence, at time $t_0$ no pair is revealed, but at time $t_1$ all six pairs are revealed, since the wrapped probabilistic encryption needs to be stripped before an equi-join is feasible.
The equality condition can be restricted to a few columns by using re-encryptable deterministic encryption \cite{kerschbaum2013optimal,TCC17}.
However, in our example there are only two columns and both are involved in the same join operation.
Specialized schemes, such as \cite{bilinearjoins,bloomfilterjoin,pang2014privacy}, also maintain a re-encryption token for the entire table covered by the pair of columns.
Hence, they do not offer any improvement against our analysis of CryptDB.

A state-of-the-art encryption scheme for joins over encrypted data by Hahn et al.~\cite{Joins}, loosely speaking, replaces the probabilistic encryption by key-policy attribute-based encryption (KP-ABE)~\cite{kpabe}.
They also use searchable encryption instead of deterministic encryption, but we ignore this in our paper, since it does not impact our analysis.
KP-ABE ensures that only rows that match a selection criterion specified by the attributes, can be decrypted and hence their ciphertext can be matched.
After, time $t_1$ this scheme reduces the revealed pairs to $(a_0^1, b_0^2)$ which is the minimum at this point of time.

However, consider what happens in this encryption scheme at time $t_2$.
In the first query, the wrapped KP-ABE encryption on rows 1 from Team and 2 from Employees, but also row 4 from Employees since it also matches Role = ``Tester'', are removed.
In the second query, the wrapped KP-ABE encryption on rows 2 from Teams and 3 Employees, but also row 2 from Employees since it also matches Role = ``Programmer'', are removed.
In summary, at time $t_2$ the wrapped probabilistic encryption on all rows has been removed and all six (equal) pairs are revealed, since the adversary controlling the DBMS can match the unwrapped rows.
This reveals more pairs than necessary for the union of the queries.
We call this super-additive leakage, since it is more than the sum of the leakages of each query. %\KK{Some sentences in this paragraph may need to be revised.}\FK{I did.}

The challenge for our encryption scheme is to only reveal the pairs $(a_0^1, b_0^2)$ and $(a_0^2, b_0^3)$ at time $t_2$.
The goal is a leakage equal to the transitive closure over the union of the leakages of each query.
Hence, we claim that an encryption scheme with this leakage is more secure under a sequence of queries than the state-of-the-art encryption scheme by Hahn et al.
Informally speaking, we aim for re-encrypting the deterministic ciphertexts to different keys for each query when the probabilistic encryption is removed.
Constructing such an encryption scheme is not trivial, but we claim that a modification to function-hiding inner-product encryption \cite{InnerProductEnc} in combination with an encoding scheme using polynomials achieves the desired property.
\section{Preliminaries}\label{Sec:Prelim}
\subsection{Bilinear Groups}\label{Sec:Bilinear}
Let $\mathbb{G}_1$ and $\mathbb{G}_2$ be two distinct groups of prime order $q$, and let $g_1 \in \mathbb{G}_1$ and $g_2 \in \mathbb{G}_2$ be generators of the respective groups. 
Let $e:\mathbb{G}_1 \times \mathbb{G}_2 \rightarrow \mathbb{G}_T$ be a function that maps two elements from $\mathbb{G}_1$ and $\mathbb{G}_2$ onto a target group $\mathbb{G}_T$, which is also of prime order $q$. We follow the style of Kim et al.~\citep{InnerProductEnc} to write the
group operation in $\mathbb{G}_1$, $\mathbb{G}_2$ and $\mathbb{G}_T$ multiplicatively and write 1 to denote their multiplicative identity. The tuple $(\mathbb{G}_1, \mathbb{G}_2, \mathbb{G}_T, q, e)$ is an asymmetric bilinear group \citep{BilinearGroups1,BilinearGroups2,BilinearGroups3} if these properties hold:
\begin{itemize}
    \item The group operation in the groups $\mathbb{G}_1$, $\mathbb{G}_2$, $\mathbb{G}_T$ and the mapping $e$, are all efficiently computable.
    \item The map $e$ is non-degenerate: $e(g_1, g_2) \neq 1$.
    \item The map $e$ is bilinear: for all $x,y \in \mathbb{Z}_q$, this holds:\\ $e(g^x_1, g^y_2)=e(g_1,g_2)^{xy}$
\end{itemize}
When dealing with vectors of group elements, for a group $\mathbb{G}$ of prime order $q$ with an element $g \in \mathbb{G}$ and a row vector $\mathbf{v}= (v_1, \cdots, v_n) \in \mathbb{Z}_q^n$ where $n \in \mathbb{N}$, we write $g^{\mathbf{v}}$ to denote the vectors of group elements $(g^{v_1}, \cdots, g^{v_n})$. Also, for any scalar $k \in \mathbb{Z}_q$ and vectors $\mathbf{v,w} \in \mathbb{Z}_q^n$, we write: $(g^\mathbf{v})^k=g^{(\mathbf{v}k)}$ and $g^{\mathbf{v}}.g^{\mathbf{w}}=g^{\mathbf{v}+\mathbf{w}}$. Furthermore, the pairing operation over groups is shown as: $e(g_1^{\mathbf{v}},g_2^{\mathbf{w}}) = \Pi_{i \in [n]}e(g_1^{vi}, g_2^{w_i}) =  e(g_1,g_2)^{\langle \mathbf{v,w}
\rangle}$.
\subsection{Polynomial Functions}\label{Sec:PolyFunc}
We use the definition of polynomial functions by Leung et al.~\citep{Polynomials}. Consider a polynomial $f(x) = a_nx^n+a_{n-1}x^{n-1} +\cdots+a_1x+a_0$ of the set of polynomial in $x$ over the prime field $\mathbb{Z}_q$. If the indeterminate $x$ in the expression is regarded as a \emph{variable} which can assume any value in $\mathbb{Z}_q$, then in the most natural way the polynomial $f(x)$ will give rise to a mapping of the set $\mathbb{Z}_q$ into $\mathbb{Z}_q$. This mapping is defined by the polynomial $f(x)$ as follows. To each element $c$ of the domain $\mathbb{Z}_q$ there corresponds under the mapping the unique value $f(c) = a_nc^n+a_{n-1}c^{n-1} +\cdots+a_1c+a_0$ of the range $\mathbb{Z}_q$. Thus this is the mapping $c \rightarrow f(c)$ of $\mathbb{Z}_q$ into $\mathbb{Z}_q$. This mapping is called the \emph{polynomial function} in the variable $x$ defined by the polynomial $f(x)$. 
Polynomial functions that are bounded by Lemma \ref{lem:Schwartz-Zippel} in their probability of evaluating to zero. 
\begin{lemma}{(Schwartz-Zippel \citep{Schwartz,Zippel} adapted in \citep{InnerProductEnc})} \label{lem:Schwartz-Zippel}
Fix a prime $q$ and let $f \in \mathbb{Z}_q[x_1, \cdots, x_n]$ be an n-variate polynomial with total degree at most $t$ and which is not identically zero. Then,
\begin{equation}
    Pr[x_1, \cdots, x_n \xleftarrow{R}\mathbb{Z}_q:f(x_1, \cdots, x_n)=0] \leq \frac{t}{q}.
\end{equation}
\end{lemma}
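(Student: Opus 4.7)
The plan is to prove the Schwartz-Zippel lemma by induction on the number of variables $n$. This is the classical argument, and it fits cleanly into two steps: a univariate base case that follows from the fundamental theorem of algebra over a field, and an inductive reduction that isolates the highest-degree dependence on a single variable.

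For the base case $n=1$, I would observe that $\mathbb{Z}_q$ is a field (since $q$ is prime), so a nonzero univariate polynomial of degree at most $t$ has at most $t$ roots in $\mathbb{Z}_q$. A uniformly random $x_1 \in \mathbb{Z}_q$ is therefore a root with probability at most $t/q$, which establishes the claim when $n=1$.

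For the inductive step, assuming the bound for $(n-1)$-variate polynomials, I would write $f$ as a polynomial in $x_n$ with coefficients in $\mathbb{Z}_q[x_1,\ldots,x_{n-1}]$, namely
\[
f(x_1,\ldots,x_n) \;=\; \sum_{i=0}^{k} x_n^{i}\, f_i(x_1,\ldots,x_{n-1}),
\]
where $k$ is the degree of $f$ in $x_n$ and $f_k$ is not identically zero. Since $x_n^{k} f_k$ contributes total degree at most $t$, the polynomial $f_k$ has total degree at most $t-k$. I would then decompose the event $\{f(x_1,\ldots,x_n)=0\}$ according to whether the leading coefficient vanishes: by a union bound,
\[
\Pr[f=0] \;\leq\; \Pr[f_k(x_1,\ldots,x_{n-1})=0] \;+\; \Pr\!\left[f=0 \,\middle|\, f_k \neq 0\right].
\]
The first term is at most $(t-k)/q$ by the inductive hypothesis applied to $f_k$. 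Conditioned on $f_k(x_1,\ldots,x_{n-1}) \neq 0$, the polynomial $f$ regarded as a univariate polynomial in $x_n$ has degree exactly $k$ and is nonzero, so the base case bounds the second term by $k/q$. Adding these two contributions yields $\Pr[f=0]\leq (t-k)/q + k/q = t/q$, which completes the induction.

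The only subtle step is the conditioning in the second term: one must be careful that once $x_1,\ldots,x_{n-1}$ are fixed so that $f_k\neq 0$, the remaining randomness over $x_n$ is still uniform over $\mathbb{Z}_q$, so that the univariate base case legitimately applies. This is immediate because $x_n$ is drawn independently of $x_1,\ldots,x_{n-1}$, but it is the one place where the argument could be botched by a careless application of conditional probability. Everything else is routine degree accounting.
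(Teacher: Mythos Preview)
Your argument is the standard and correct induction proof of the Schwartz--Zippel lemma; the degree bookkeeping and the conditioning step are handled properly. Note, however, that the paper does not supply its own proof of this lemma: it is stated as a known result with citations to \cite{Schwartz,Zippel} (as adapted in \cite{InnerProductEnc}) and used as a black box, so there is no in-paper proof to compare against.
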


\subsection{Function-Hiding Inner Product Encryption}\label{Sec:FHIPE}
We build our scheme on the function-hiding inner-product encryption construction by Kim et al.~\cite{InnerProductEnc}. Their scheme consists of four algorithms $\Pi_{ipe}$  = (IPE.Setup, IPE.KeyGen, IPE.Encrypt, IPE.Decrypt) described below. We keep their notations where bold lowercase letters (e.g. \textbf{v,w}) denotes vectors and bold uppercase letters (e.g. $\mathbf{B , B^*}$) denote matrices. $\mathbb{GL}_n(\mathbb{Z}_q)$ is the general linear group of ($n \times n$) matrices over $\mathbb{Z}_q$.
\begin{enumerate}
    \item IPE.Setup$(1^\lambda, S)$:  On input of the security parameter $\lambda$ and $S$ a polynomial-sized (in $\lambda$) subset of $\mathbb{Z}_q$, the setup algorithm samples an asymmetric bilinear group $(\mathbb{G}_1,\mathbb{G}_2,\mathbb{G}_T,q,e)$ and chooses generators $g_1 \in \mathbb{G}_1$ and $g_2 \in \mathbb{G}_2$.  Then, it samples $\mathbf{B} \gets \mathbb{GL}_n(\mathbb{Z}_q)$ and sets $\mathbf{B^*} = det(\mathbf{B}) \cdot (\mathbf{B}^{-1})^T$. Finally, the setup algorithm outputs the public parameters $pp= (\mathbb{G}_1,\mathbb{G}_2,\mathbb{G}_T,q,e)$ and the master secret key $msk= (pp,g_1,g_2,\mathbf{B},\mathbf{B^*})$. 
    \item IPE.KeyGen(msk, \textbf{v}): On input of the  master  secret  key msk and  a  vector $\mathbf{v} \in \mathbb{Z}^n_q$,  the  key generation algorithm chooses a uniformly random element $\alpha \xleftarrow{R} \mathbb{Z}_q$ and outputs the pair: $sk = (K_1, K_2) = (g_1^{\alpha \cdot det(\mathbf{B})}, g_1^{\alpha \cdot \mathbf{v} \cdot \mathbf{B}})$. Note that the second component is a vector of group elements.
    \item IPE.Encrypt(msk, \textbf{w}): On input of the  master  secret  key msk and  a  vector $\mathbf{w} \in \mathbb{Z}^n_q$,  the encryption algorithm chooses a uniformly random element $\beta \xleftarrow{R} \mathbb{Z}_q$ and outputs the pair: $C = (C_1, C_2) = (g_2^{\beta}, g_2^{\beta \cdot w \cdot \mathbf{B^*}})$.
    \item IPE.Decrypt(pp, sk, ct):  On input of the public parameters pp, a secret key $sk= (K_1,K_2)$ and a ciphertext $C= (C_1,C_2)$, the decryption algorithm computes $D_1=e(K_1,C_1)$ and $D_2=e(K_2,C_2)$. Then, it checks whether there exists $z \in S$ such that $(D_1)^z = D_2$. If so, the decryption algorithm outputs $z$. Otherwise, it outputs $\perp$. The efficiency of this algorithm is guaranteed by $|S| =poly(\lambda)$.
\end{enumerate}
The correctness of $\Pi_{ipe}$ holds when the plaintext vectors \textbf{v} and \textbf{w} satisfy $\langle \mathbf{v, w} \rangle \in S$ for a polynomially-sized $S$. Since $D_1 = e(K_1, C_1) = e(g_1,g_2)^{\alpha \beta \cdot det(\mathbf{B})}$ and $D_2 = e(K_2, C_2) = e(g_1, g_2)^{\alpha \beta \cdot \mathbf{v} \cdot \mathbf{B}(\mathbf{B^*})^T\mathbf{w}^T} = e(g_1,g_2)^{\alpha \beta  \cdot det(\mathbf{B}) \cdot \langle \mathbf{v, w} \rangle}$, the decryption algorithm will correctly output $\langle \mathbf{v, w} \rangle$ if $\langle \mathbf{v, w} \rangle \in S$.
% \input{Protocol}
% From http://tex.stackexchange.com/questions/164707/how-to-use-greek-letters-in-pgf-umlsd-or-generally-terms-starting-with
\renewcommand{\mess}[4][0]{
  \stepcounter{seqlevel}
  \path
  (#2)+(0,-\theseqlevel*\unitfactor-0.7*\unitfactor) node (mess from) {};
  \addtocounter{seqlevel}{#1}
  \path
  (#4)+(0,-\theseqlevel*\unitfactor-0.7*\unitfactor) node (mess to) {};
  %\\draw[->,>=angle 60] (mess from) -- (mess to) node[midway, above]
  \draw[,] (mess from) (mess to) node[midway, above]
  {#3};
  \node (\detokenize{#3} from) at (mess from) {};
  \node (\detokenize{#3} to) at (mess to) {};
}
% From http://tex.stackexchange.com/questions/98525/pgf-umlsd-and-externalize
\newcommand{\sdinit}{%
   \pgfdeclarelayer{umlsd@background}%
   \pgfdeclarelayer{umlsd@threadlayer}%
   \pgfsetlayers{umlsd@background,umlsd@threadlayer,main}%
}
\newcommand{\sdbegin}{%
   \setlength{\unitlength}{1cm}%
   \tikzstyle{sequence}=[coordinate]%
   \tikzstyle{inststyle}=[rectangle, draw, anchor=west, minimum
   height=0.8cm, minimum width=1.6cm, fill=white, 
   drop shadow={opacity=1,fill=black}]%
   \ifpgfumlsdroundedcorners%
      \tikzstyle{inststyle}+=[rounded corners=3mm]%
   \fi%
   \tikzstyle{blockstyle}=[anchor=north west]%
   \tikzstyle{blockcommentstyle}=[anchor=north west, font=\small]%
   \tikzstyle{dot}=[inner sep=0pt,fill=black,circle,minimum size=0.2pt]%
   \global\def\unitfactor{0.6}%
   \global\def\threadbias{center}%
   % reset counters
   \setcounter{preinst}{0}%
   \setcounter{instnum}{0}%
   \setcounter{threadnum}{0}%
   \setcounter{seqlevel}{0}%
   \setcounter{callevel}{0}%
   \setcounter{callselflevel}{0}%
   \setcounter{blocklevel}{0}%
   % origin
   \node[coordinate] (inst0) {};%
}
\newcommand{\sdend}{%
   \begin{pgfonlayer}{umlsd@background}%
      \ifnum\value{instnum}>0%
         \foreach \t [evaluate=\t] in {1,...,\theinstnum}{%
            \draw[dotted] (inst\t) -- ++(0,-\theseqlevel*\unitfactor-2.2*\unitfactor);%
         }%
         
      \fi%
      \ifnum\value{threadnum}>0%
         \foreach \t [evaluate=\t] in {1,...,\thethreadnum}{%
            \path (thread\t)+(0,-\theseqlevel*\unitfactor-0.1*\unitfactor) node (threadend) {};%
            \tikzstyle{threadstyle}+=[threadcolor\t]%
            \drawthread{thread\t}{threadend}%
         }%
      \fi%
   \end{pgfonlayer}%
}

\section{Protocol Overview}\label{Sec:Protocol}
Let tables $T_A$ and $T_B$ be the tables to be encrypted, and joined over a set of rows selected based on their attribute values. We propose a protocol that achieves this goal securely. In this section, we first describe our implementation of the selection operation, then we explain the modifications we made to the function-hiding inner product encryption scheme of Section \ref{Sec:FHIPE}, in order to implement a combined \textit{selection and joins} operation. Subsequently in Section \ref{Sec:Scheme}, we provide a full picture of our scheme in details. We clarify the steps of our protocol described in this section with the aid of Example \ref{ex:Sample}. 

\begin{example}\label{ex:Sample}
Assume sample rows $r$ and $r^{\prime}$ of the tables $T_A$ and $T_B$ indicated in Tables \ref{Tab:Table_A} and \ref{Tab:Table_B}. 
\begin{table}[h]
%   \parbox{.45\linewidth}{
  \centering
  \begin{tabular}{|>{\centering}p{0.07\textwidth}|>{\centering}p{0.07\textwidth}|>{\centering\arraybackslash}p{0.07\textwidth}|}
    \hline & &\\[-1em]
     $A_0$   & $A_1$   &  $A_2$      \\ [1ex]\hline  & &\\[-1em]
    $a_0^r$       & $a_1^r$        & $a_2^r$  \\ [1ex]\hline
  \end{tabular}
  \caption{A sample row $r$ in $T_A$}
  \label{Tab:Table_A}
  \hfill
\end{table}
\begin{table}[h]
% \parbox{.45\linewidth}{
  \centering
  \begin{tabular}{|>{\centering}p{0.07\textwidth}|>{\centering}p{0.07\textwidth}|>{\centering\arraybackslash}p{0.07\textwidth}|}
    \hline & &\\[-1em]
     $B_0$   & $B_1$   &  $B_2$      \\ [1ex]\hline  & &\\[-1em]
    $b_0^{r^{\prime}}$ & $b_1^{r^{\prime}}$  & $b_2^{r^{\prime}}$  \\ [1ex]\hline
  \end{tabular}
  \caption{A sample row $r^{\prime}$ in $T_B$}
  \label{Tab:Table_B}
\end{table}

Consider the following database equi-join query, with specified selection filters:
\begin{center}
\sf{SELECT $\star$ FROM $T_A$ JOIN $T_B$ ON $A_0 = B_0$ WHERE\\ $A_1$ IN $\Phi_1 = (\phi_{1,1}, \ldots, \phi_{1,t})$ AND $B_1$ IN $\Psi_1 = (\psi_{1,1}, \ldots, \psi_{1,t})$.}
\end{center}

This query results in the join of the sample rows in Tables \ref{Tab:Table_A} and \ref{Tab:Table_B}, if $a_0^r==b_0^{r^{\prime}}$, and if the specified selection criterion matches the values of $a_1^r$ and $b_1^{r^{\prime}}$; in other words: 
\begin{center}
$\exists \phi_{1,z} \in \Phi_1$  s.t. $a_1^r=\phi_{1,z} \land \exists \psi_{1,z} \in \Psi_1$ s.t. $b_1^{r^{\prime}}=\psi_{1,z}$.    
\end{center}
\end{example}
The join query's \emph{WHERE} clause in Example \ref{ex:Sample}, imposes $t$ restrictions ($\Phi_1$) on one attribute ($A_1$) in $T_A$, and $t$ restrictions ($\Psi_1$) on one attribute ($B_1$) in $T_B$. As we describe in Section \ref{sec:polynomials}, in general, the IN clause for table $T_{\tau}$, $\tau \in \{A,B\}$, can impose a maximum of $t$ restrictions on each of the $m$ attributes in table $T_{\tau}$.   

\subsection{Encoding Selection Operations in Polynomials}\label{sec:polynomials}
We implement the selection operation through the usage of polynomial functions. Each polynomial $P_i$, $i \in [m]$, is of degree $t$ and can encode maximum $t$ attribute values, as its roots. These attribute values are specified by $\Phi_i$ in the \emph{IN} clause of the join query for $T_A$, and all belong to the same domain $X_i$. Recall from Section \ref{Sec:Problem} that $X_i$ is the domain of the attribute $A_i$. Hence, $P_i$ enables the query to select the rows from $T_A$ that have \textit{particular} attribute values as attribute $A_i$. Similarly, each polynomial $Q_i$, $i \in [m]$, takes the values specified in the \emph{IN} clause $\Psi_i$ for $T_B$, as its roots. These values belong to $Y_i$, which is the domain of the attribute $B_i$. Therefore, $Q_i$ enables the join query to select the rows from $T_B$ with \textit{desired} attribute values for $B_i$. Both $P_i(x)$ and $Q_i(y)$ take their coefficients from $\mathbb{Z}_q$.

Thus: $P_i(x) = \sum_{j=0}^{t} p_{i, j} \cdot x^j$, such that $P_{i}(\phi_{i,z})=0$, $z \in [t]$. In a similar way, $Q_i(y)=\sum_{j=0}^{t} q_{i, j} \cdot y^j$, $i \in [m]$, such that $Q_{i}(\psi_{i,z})=0$, $z \in [t]$. We emphasize that with the requirements of degree $t$, and maximum $t$ specified points (roots), each $P_i$ or $Q_i$ can take any polynomial from a set of at least $q$ distinct polynomials. If an attribute $A_i$ or $B_i$ is not included the selection criterion, it is encoded as the zero polynomial; i.e.~$P_i = 0$ or $Q_i = 0$ respectively. 
Otherwise, according to Lemma \ref{lem:Schwartz-Zippel}, the probability of the non-zero polynomials of form $P_i$ or $Q_i$ evaluating to zero at a randomly selected point is bounded by $\frac{t}{q}$. We assume an efficient and injective embedding from the attribute values of $A_i$'s and $B_i$'s, $i \in [m]$ , to $\mathbb{Z}_q$ which generates elements in $\mathbb{Z}_q$ uniformly at random, to comply with the Schwartz-Zippel lemma. We use a cryptographic hash function to provide such a mapping.

To apply the filtering predicates in $\Phi_i$'s and $\Psi_i$'s specified in the query, the client sends the corresponding polynomials' coefficients $p_{i,j}$'s and $q_{i,j}$'s as join tokens to the server. The server multiplies these coefficients by their corresponding [pre-stored] powers of attribute values. When these polynomials evaluating to zeros at rows with the target attribute values, they unwrap the join value for the server. 

\begin{example}\label{ex:PolyEncoding}
Consider $T_A$ and $T_B$ in Tables \ref{Tab:Table_A} and \ref{Tab:Table_B}, in addition to the join query in Example \ref{ex:Sample}. The client uploads the non-join attribute values $a_1^r$ and $a_2^r$ of $T_A$ on the server, as [an encrypted] vector $((a_1^r)^0, \cdots, (a_1^r)^t$, $(a_2^r)^0, \cdots, (a_2^r)^t)$. Similarly the non-join attribute values of $T_B$, i.e. $b_1^{r^{\prime}}$ and $b_2^{r^{\prime}}$, are stored as: [encrypted] $((b_1^{r^{\prime}})^0, \cdots, (b_1^{r^{\prime}})^t$, $(b_2^{r^{\prime}})^0, \cdots, (b_2^{r^{\prime}})^t)$. At query time, the client selects their filtering predicates, such as $\Phi_1$ for attribute $A_1$, and $\Psi_1$ for attribute $B_1$. Hence, the client chooses $P_1(x)$ and $Q_1(y)$ such that they evaluate to zero at $\phi_{1,z}$'s, $z \in [t]$ and $\psi_{1,z}$'s, $z \in [t]$ respectively. For the other attributes, i.e.~ $a_2^r$ and $b_2^{r^{\prime}}$, the client assigns polynomials that are identical to zero. Thus, the client's join query token, consists [encrypted] $( p_{10}, \cdots, p_{1t}, \Vec{0})$ for $T_A$ and [encrypted] $(q_{10}, \cdots, q_{1t}, \Vec{0})$ for $T_B$, where $\Vec{0}$ is an all-zero vector of length $t+1$. It is easy to see that the inner product of the token generated for $T_A$ and the stored vector for $T_A$ equals zero, if $a_1^r=\phi_{1,z}$, for a $\phi_{1,z} \in \Phi_1$. A similar argument holds for $T_B$ and $b_1^{r^{\prime}}=\psi_{1,z}$, for a $\psi_{1,z} \in \Psi_1$. To give a clear picture of the polynomial-encoding implementation in this example, we skipped the details of the encryption operation. We provide a full description of our scheme in Section \ref{Sec:Scheme}. 
\end{example}
\subsection{Modified Function-hiding IPE}\label{Sec:Modifications}
We described the function-hiding inner-product encryption construction $\Pi_{ipe}$ by Kim et al.~[2] in Section \ref{Sec:FHIPE}, which is the base for our scheme. However, we made the following modifications on the construction to adjust it with our scheme's needs.
\begin{enumerate}
    \item We set the random parameters in IPE.KeyGen and IPE.Encrypt to 1, i.e.~$\alpha = \beta =1$. We instead incorporate random parameters $\delta$ and $\gamma$ in the input vectors \textbf{v} and \textbf{w} of IPE.Key and IPE.Encrypt, making vectors are of forms $\mathbf{v} = (\mathbf{v^{\prime}}, 0, \delta)$ and $\mathbf{w} = (\mathbf{w^{\prime}}, \gamma, 0)$.
    \item $\Pi_{ipe}$ generates a pair of secret keys $sk = (K_1, K_2)$ during IPE.KeyGen, a pair of encrypted outputs $C=(C_1, C_2)$ in IPE.Encrypt that decrypt to the pair $(D_1, D_2)$ in IPE.Decrypt. In our scheme, we just use one element of these pairs. Hence, as we describe in full details in Section \ref{Sec:Scheme}, we use the following parameters in our scheme: $sk\footnote{We show this value by Tk in our scheme, as it acts as an unlocking token.}  = g_1^{\mathbf{v} \cdot B}$, $C = g_2^{\mathbf{w} \cdot B^{\star}}$, $D = e(g_1, g_2)^{det(B) \cdot \langle \mathbf{v} , \mathbf{w} \rangle}$.
    \item $\Pi_{ipe}$ extracts the value of $\langle \mathbf{v}, \mathbf{w} \rangle$ from the decrypted value $D$ at the end of the protocol. We are not interested in obtaining the value of $\langle \mathbf{v}, \mathbf{w} \rangle$ in our scheme, but a deterministic function of this value. Hence, we do not require the  $\langle \mathbf{v}, \mathbf{w} \rangle$ reside in the polynomial-sized subset $S$ for which one can break the discrete logarithm with overwhelming probability. Ultimately, We apply $\Pi_{ipe}$ twice independently in our join encryption scheme, first on table $T_A$ and then on table $T_B$\footnote{The order does not matter here.}. We calculate $D(sk,C)$ for both of these iteration and conclude a ``match'' if the obtained D's are equal.
\end{enumerate}

\subsection{Our Secure Join Scheme}\label{Sec:Scheme}
\begin{figure*}[h!]
   \centering
   \sdinit{}
   \begin{tikzpicture}
      % Define symbols and names for the parties
      \sdbegin{}
      \newinst{A}{$T_A$}
      \newinst[5]{B}{$T_B$} % Increase "5" to widen
      \begin{sdblock}{Upload Phase}{}
      \begin{sdblock}{Setup (Client)}{}
      \mess{B}{\shortstack[c]{ }}{A}
      \node[anchor=center] at (mess from) {\shortstack[l]{
      $pp= (\mathbb{G}_1,\mathbb{G}_2,\mathbb{G}_T,q,e)$\\
      $msk= (pp,g_1,g_2,\mathbf{B},\mathbf{B^*})$\hspace{9.8em}
      }};
      \node[anchor=center] at (mess to) {\shortstack[l]{
      $pp= (\mathbb{G}_1,\mathbb{G}_2,\mathbb{G}_T,q,e)$\\
      $msk= (pp,g_1,g_2,\mathbf{B},\mathbf{B^*})$\hspace{9.8em}
      }};      
      \end{sdblock}
      \begin{sdblock}{Encryption (Client)}{}
      \mess{A}{\shortstack[c]{ }}{B}
      \node[anchor=center] at (mess from) {\shortstack[l]{\vspace{3.5em} \\
      $\gamma_{A,1}^r, \gamma_{A,2}^r \xleftarrow{R} \mathbb{Z}_q$\\
      $C_{A}^r = g_2^{\mathbf{w}_{A}^r \mathbf{B^{\star}}}$\\
      $\mathbf{w}_{A}^r = (\boldsymbol{\omega}^r_A, \gamma_{A,1}^r, 0 )$\\
      $\boldsymbol{\omega}^r_{A} = (H(a_0^r), \gamma_{A,2}^r(a_1^r)^0, \cdots, \gamma_{A,2}^r(a_1^r)^t,$ \\
      \hspace{2em} $\gamma_{A,2}^r(a_2^r)^0, \cdots, \gamma_{A,2}^r(a_2^r)^t)$ \hspace{6.4em}
      }};
      \node[anchor=center] at (mess to) {\shortstack[l]{\vspace{3.5em} \\
      $\gamma_{B,1}^{r^{\prime}}, \gamma_{B,2}^{r^{\prime}} \xleftarrow{R} \mathbb{Z}_q$\\
      $C_{B}^{r^{\prime}} = g_2^{\mathbf{w}_{B} \mathbf{B^{\star}}}$\\
      $\mathbf{w}_{B}^{r^{\prime}} = (\boldsymbol{\omega}_B^{r^{\prime}}, \gamma_{B,1}^{r^{\prime}}, 0)$\\
      $\boldsymbol{\omega}_{B}^{r^{\prime}} = (H(b_0^{r^{\prime}}), \gamma_{B,2}^{r^{\prime}}(b_1^{r^{\prime}})^0, \cdots, \gamma_{B,2}^{r^{\prime}}(b_1^{r^{\prime}})^t,$ \\
      \hspace{2em} $\gamma_{B,2}^{r^{\prime}}(b_2^{r^{\prime}})^0, \cdots, \gamma_{B,2}^{r^{\prime}}(b_2^{r^{\prime}})^t)$ \hspace{6.7em}
      }};        
      \end{sdblock}
      \postlevel
      \postlevel
      \end{sdblock}
      \vspace{3em}
      \begin{sdblock}{Query Phase}{}
      \begin{sdblock}{Join Query (Client)}{}
      \mess{A}{\shortstack[c]{ }}{B}
      \node[anchor=center] at (mess from) {\shortstack[l]{\vspace{5em} \\
      $k \xleftarrow{R} \mathbb{Z}_q\setminus{}\{0\}$\\
      $\delta_{A} \xleftarrow{R} \mathbb{Z}_q$\\
      $Tk_{A} = g_1^{\mathbf{v}_{A} \mathbf{B}}$\\
      $\mathbf{v}_{A} = (\boldsymbol{\nu}_{A}, 0, \delta_{A})$\\
      $\boldsymbol{\nu}_{A} = (k, p_{10}, \cdots, p_{1t}, \Vec{0})$, \\
      \hspace{1em} where $P_1(\phi_{1,j}) = 0$ for all $\phi_{1,j} \in \Phi_1$, $j \in [t]$ 
      }};
      \node[anchor=center] at (mess to) {\shortstack[l]{\vspace{5em} \\
      \vspace{1.5em}\\
      $\delta_{B} \xleftarrow{R} \mathbb{Z}_q$\\
      $Tk_{B} = g_1^{\mathbf{v}_{B} \mathbf{B}}$\\
      $\mathbf{v}_{B} = (\boldsymbol{\nu}_{B}, 0, \delta_{B})$\\
      $\boldsymbol{\nu}_{B} = (k, q_{10}, \cdots, q_{1t}, \Vec{0})$, \\
      \hspace{1em} where $Q_1(\psi_{1,j^{\prime}}) = 0$ for all $\psi_{1,j^{\prime}} \in \Psi_1$, $j^{\prime} \in [t]$
      }};
      \end{sdblock}
      \postlevel
      \postlevel      
      \postlevel
      \begin{sdblock}{Query Processing (Server)}{}
      \mess{A}{\shortstack[c]{ }}{B}
      \node[anchor=center] at (mess from) {\shortstack[l]{
       $D_{A}^r = e(Tk_{A}, C_{A}^r)$\\
       $D_{A}^r = e(g_1, g_2)^{det(B)kH(a_0^r)+P_1(a_1^r)}$ \hspace{4.9em}
       }};
      \node[anchor=center] at (mess to) {\shortstack[l]{
      $D_{B}^{r^{\prime}} = e(Tk_{B}, C_{B}^{r^{\prime}})$\\
      $D_{B}^{r^{\prime}} = e(g_1, g_2)^{det(B)kH(b_0^{r^{\prime}})+Q_1(b_1^{r^{\prime}})}$ \hspace{5.1em}
      }};        
      \end{sdblock}
      \begin{sdblock}{Query Result (Server)}{}
      \mess{A}{\shortstack[l]{ }}{B}
      \node[anchor=east] at (mess from) {\shortstack[l]{
      }};
      \node[anchor=east] at (mess to) {\shortstack[c]{\vspace{3em}\\
      If and only if $D_A^r == D_B^{r^{\prime}}$, then ``join'' takes place, since:\\
      the selection criterion are satisfied  \\
      ($\exists j, j^{\prime} \in [t]$ s.t. $a_1^r=\phi_{1,j}$ and $b_1^{r^{\prime}}=\psi_{1,j^{\prime}}$), \\
      and the join values of the two tables match ($a_0^r=b_0^{r^{\prime}}$). 
      }};        
      \end{sdblock}
      \postlevel
      \end{sdblock}
      %\sdend{}
   \end{tikzpicture}
   \caption{Secure Join on $\mathbf{T_A, T_B}$ and the join query of Example \ref{ex:Sample}}
   \label{fig:SecJoin}
\end{figure*}
Our Secure Join scheme consists of five algorithms, namely (SJ.Setup, SJ.TokenGen, SJ.Enc, SJ.Dec, SJ.Match). The algorithms SJ.Setup, SJ.TokenGen, and SJ.Enc are applied by the client, on $T_{\tau}$, where $\tau \in \{A,B\}$. The server applies SJ.Dec to $T_{\tau}$. After applying the first four algorithms to both $T_A$ or $T_B$ and obtaining $D_A$ and $D_B$, the server applies the fifth algorithm, SJ.Match, to the results to find out whether $D_A$ and $D_B$ match. A positive answer allows performing the join operation.
\begin{enumerate}
    \item SJ.Setup$(1^\lambda)$: (Client, upload phase)\\
    On input the security parameter $\lambda$, the setup algorithm samples an asymmetric bilinear group $(\mathbb{G}_1,\mathbb{G}_2,\mathbb{G}_T,q,e)$ and chooses generators $g_1 \in \mathbb{G}_1$ and $g_2 \in \mathbb{G}_2$. Then, it samples $\mathbf{B} \gets \mathbb{GL}_n(\mathbb{Z}_q)$ and sets $\mathbf{B^*} = det(\mathbf{B}) \cdot (\mathbf{B}^{-1})^T$. Finally, the setup algorithm outputs the public parameters $pp= (\mathbb{G}_1,\mathbb{G}_2,\mathbb{G}_T,q,e)$ and the master secret key $msk= (pp,g_1,g_2,\mathbf{B},\mathbf{B^*})$.
    \item SJ.Enc($msk$, $\mathbf{w}_{\tau}^r$): (Client, upload phase) \\
    The encryption algorithm takes as input  the  master  secret key $msk$ and a vector $\mathbf{w}_{\tau}^r \in \mathbb{Z}^{m(t+1)+3}_q$ constructed from row $r$ in table $T_{\tau}$. To construct $\mathbf{w}_{\tau}^r$, the encryption algorithm chooses two uniformly random elements $\gamma^r_{\tau,1},\gamma^r_{\tau,2} \xleftarrow{R} \mathbb{Z}_q$ to form  $\mathbf{w}_{\tau}^r = ( \boldsymbol{\omega}^r_{\tau}, \gamma^r_{\tau,1}, 0)$. The vector $\boldsymbol{\omega}^r_{\tau}$, represents the information in the row $r$ of table $T_{\tau}$. This information, is the hash of the join value and $t$ powers of each of the other attribute values. Recall that $t$ is the same as the degree of polynomials introduced in Section \ref{sec:polynomials}. These powers of attributes values in $\boldsymbol{\omega}^r_{\tau}$ are obfuscated by $\gamma^r_{\tau,2}$ in $\mathbf{w}_{\tau}^r$. Therefore, for a sample row $r \in [n]$ in $T_A$ shown in Table \ref{Tab:Table_A}, we have $\boldsymbol{\omega}_{A}^r = (H(a_0^r), \gamma^r_{A,2} \cdot (a_1^r)^0, \cdots, \gamma^r_{A,2} \cdot (a_1^r)^t, \cdots, \gamma^r_{A,2} \cdot (a_m^r)^0, \cdots, \gamma^r_{A,2} \cdot (a_m^r)^t)$. In a similar way, for a row $r^{\prime} \in [n]$ of $T_B$ in Table \ref{Tab:Table_B}, we have $\boldsymbol{\omega}_{B}^{r^{\prime}} = (H(b_0^{r^{\prime}}), \gamma^{r^{\prime}}_{B,2} \cdot (b_1^{r^{\prime}})^0, \cdots, \gamma^{r^{\prime}}_{B,2} \cdot (b_1^{r^{\prime}})^t, \cdots, \gamma^{r^{\prime}}_{B,2} \cdot (b_m^{r^{\prime}})^0, \cdots, \gamma^{r^{\prime}}_{B,2} \cdot (b_m^{r^{\prime}})^t)$. The cryptographic hash function $H(\cdot)$ used in forming $\boldsymbol{\omega}_{A}^{r}$ and $\boldsymbol{\omega}_{B}^{r^{\prime}}$, maps each attribute values of the join column to a fixed-size value, and acts [as much as practically possible] like a random function.
    
     Now, SJ.Enc($\cdot$) is ready to perform the encryption, and computes $C^r_{\tau} = g_2^{\mathbf{w}^r_{\tau} \cdot \mathbf{B^{\star}}}$.
    %\item SJ.TokenGen($msk$, $T_{\tau}$, $\Xi_{\tau}$): (Client, query phase)\\
    \item SJ.TokenGen($msk$, $\Xi_{\tau}$): (Client, query phase)\\
    %The token generation algorithm takes as input the master secret key $msk$, the table $T_{\tau}$,
    The token generation algorithm takes as input the master secret key $msk$
    % \FK{Whoa! The table $\tau$ is only available at the server and I couldn't see where it is actually necessary}
    and the join-query's filtering predicates for table $T_{\tau}$ shown by $\Xi_{\tau} = (\xi_{\tau,1}, \cdots, \xi_{\tau,m})$. Recall from Section \ref{sec:polynomials} that $\Xi_{A} = (\Phi_1, \cdots, \Phi_m)$ and $\Xi_{B} = (\Psi_1, \cdots, \Psi_m)$. We elaborated in Section \ref{sec:polynomials} that how the client chooses polynomials $P_i$'s and $Q_i$'s, $i \in [m]$, to encode the values specified in the \emph{IN} clauses $\Phi_i$'s and $\Psi_i$'s respectively. We also explained that for each $P_i$ or $Q_i$, there are at least $q$ such polynomials that the client can choose their candidate from. To generates a token for the join query, the SJ.TokenGen($\cdot$) algorithm chooses a uniformly random element $\delta_{\tau} \xleftarrow{R} \mathbb{Z}_q$ and generates a vector $\mathbf{v}_{\tau} \in \mathbb{Z}^{m(t+1)+3}_q$ of the form $\mathbf{v}_{\tau} = (\boldsymbol{\nu}_{\tau}, 0, \delta_{\tau})$. The vector $\boldsymbol{\nu}_{\tau}$ consists of a [non-zero] symmetric secret query key $k$ chosen randomly from $\mathbb{Z}_q \setminus{}\{0\}$ for encrypting the join attribute, and the coefficients of the polynomials corresponding to the filtering predicates $\Xi_{B} = (\Psi_1, \cdots, \Psi_m)$. Hence, to run the sample join query in Example \ref{ex:Sample}, the clients first needs to generate vectors
    $\boldsymbol{\nu}_A = (k, p_{1,0}, \cdots, p_{1,t}, \cdots, p_{m,0}, \cdots, p_{m,t})$ for table $T_A$, and $\boldsymbol{\nu}_B = (k, q_{1,0}, \cdots, q_{1,t}, \cdots, q_{m,0}, \cdots, q_{m,t})$ for table $T_B$. 
    Now, the token generation algorithm can compute $Tk_{\tau} = g_1^{\boldsymbol{\nu}_{\tau} \cdot \mathbf{B}}$, as the final token. 
    \item SJ.Dec(pp, $Tk_{\tau}$, $Ct_{\tau}^r$): (Server, query phase)\\
    On input of the public parameters $pp$, a token $Tk_{\tau}$, and a ciphertext $Ct_{\tau}^r$, the decryption algorithm computes $D^r_{\tau} = e(Tk_{\tau}, C^r_{\tau})$ for a row $r$ in $T_{\tau}$. The output of SJ.Dec($\cdot$), if the selection criteria is satisfied, equals $e(g_1, g_2)^{det(B)kH(a_0^r)}$ for table $T_A$. Similarly, for the same query, the decrypted value for row $r^{\prime}$ of table $T_B$ equals $e(g_1, g_2)^{det(B)kH(b_0^{r^{\prime}})}$, when the selection criteria is satisfied.
    
    There exist many (searchable) encryption schemes \cite{Curtmola} which can be used for pre-filtering the rows with the attributes matching the selection criteria reducing the size of the tables, but they are orthogonal to our join encryption scheme.  For a better exposition, we describe only the application of our encryption scheme.
    
    \item SJ.Match($D_A^r$, $D_B^{r^{\prime}}$): (Server, query result) \\
    This algorithm inspects the results of applying the previous four algorithms to all the rows in tables $T_A$ and $T_B$, and performs a join when there is a match. In other words, the decrypted value for each row $r$ in $T_A$, $D^r_A$, is compared with that of row $r^{\prime}$ in $T_B$, and if they match, the corresponding rows $r$ and $r^{\prime}$ in $T_A$ and $T_B$ are combined to form the row $(\theta^{r,r^{\prime}}, a_1^r, \cdots, a_m^r, b_1^{r^{\prime}}, \cdots, b_m^{r^{\prime}})$ in the join table, where $\theta^{r,r^{\prime}} = a_0^r = b_0^{\prime}$.
    % \begin{enumerate}
    %     \item \ntikzmark{A}{Table A: $e(g_1, g_2)^{det(B)kH(a_0^r)}$}
    %     \item \ntikzmark{B}{Table B: $e(g_1, g_2)^{det(B)kH(b_0^{r^{\prime}})}$}
    %     \makebrace{A}{B}{Join if match ($a_0^r == b_0^{r^{\prime}}$)}
    % \end{enumerate}
\end{enumerate}
\begin{example}\label{ex:SecJoin} 
Figure \ref{fig:SecJoin} shows the steps of Secure Join for the particular example of $T_A$ and $T_B$ in Tables \ref{Tab:Table_A} and \ref{Tab:Table_B}, and the join query in Example \ref{ex:Sample}. In the upload phase, the client first chooses the protocol parameters, and then starts the encryption for the (only) row $r$ in $T_A$ and the (only) row $r^{\prime}$ in $T_B$. The client forms the information vectors of these rows, $\boldsymbol{\omega}_A^r$ and $\boldsymbol{\omega}_B^{r^{\prime}}$, and prepares them for encryption by randomizing them to obtain vectors $\mathbf{w}_A^r$ and $\mathbf{w}_B^{r^{\prime}}$ that yield the final ciphertexts $C^r_A$ and $C^{r^{\prime}}_B$. The client uploads these ciphertexts at the server.

The client initiates the query phase by generating tokens to target rows with certain attribute values, specified in $\phi_1$ for $T_A$ and in $\psi_1$ for $T_B$, for the join operation. To do so, the client uses polynomial encoding to obtain vectors $\boldsymbol{\nu}_A$ and $\boldsymbol{\nu}_B$ for $\phi_1$ and $\psi_1$, then randomizes these vectors to $\mathbf{v}_A$ and $\mathbf{v}_B$, which yield the final tokens $Tk_A$ and $Tk_B$. Receiving these tokens in the query phase, the server decrypts the stored $C^r_A$ and $C^{r^{\prime}}_B$ with $Tk_A$ and $Tk_B$ to obtain $D^r_A$ and $D^{r^{\prime}}_B$. If $D^r_A \neq D^{r^{\prime}}_B$, the server disregards the query. However, if they do match, the server performs the join and combines the rows $r$ and $r^{\prime}$.
\end{example}

\section{Security}
\label{Sec:Security}

As stated in Section \ref{Sec:Intro}, we propose a new \emph{encryption} scheme for joins that not only restricts the leakage of the equality condition to tuples that match a \emph{selection criterion}, but also where the leakage of a series of queries corresponds to the transitive closure of the union of the leakage of each query, \emph{preventing super-additive leakage}.
Our design relies on the assumption that the clients are trusted and the server is semi-honest.
A semi-honest adversary wants to learn confidential data, but does not change queries issued by the application, query results, or the data in the DBMS.
This threat includes DBMS software compromises, root access to DBMS machines, and even access to the RAM of physical machines \citep{CryptDB}.

We structure our security proof as follows:
First, we prove that our modified inner-product encryption scheme from Section~\ref{Sec:Modifications} maintains the same security property as Kim et al.'s~\cite{InnerProductEnc}.
Then, using the simulator for the inner-product encryption we construct a simulator for our join encryption scheme.

\subsection{Inner Product Encryption}
\label{Sec:sec-ipe}

Kim et al.~prove security of their function-hiding inner-product encryption according to the following SIM-Security definition\footnote{An inner product encryption scheme that is secure under the simulation-based definition, is also secure under the indistinguishability-based definition. We refer to \citep{InnerProductEnc} for more details.}

\begin{definition}{(SIM-Security for Function-Hiding Inner-Product Encryption)}
\label{def:SIMSec_IPE}
Recall $\Pi_{ipe}$ = (IPE.Setup, IPE.KeyGen, IPE.Encrypt, IPE.Decrypt) from Section \ref{Sec:FHIPE}. $\Pi_{ipe}$ is SIM-secure if for all efficient adversaries $\mathcal{A}$, there exists and efficient simulator $\mathcal{S}$ that the output of the following experiments are computationally indistinguishable in security parameter $\lambda$.
\begin{enumerate}
    \item $Real_{\mathcal{A}}(1^{\lambda})$
    \begin{itemize}
        \item $(pp, msk) \leftarrow$ IPE.Setup($1^{\lambda}$)
        \item $b \leftarrow \mathcal{A}^{\mathcal{O}_{IPE.KeyGen}(msk, \cdot), \mathcal{O}_{IPE.Encrypt}(msk,\cdot)}(\lambda)$
        \item output b
    \end{itemize}
    \item $Sim_{\mathcal{A},\mathcal{S}}(1^{\lambda})$
    \begin{itemize}
        \item $(pp, st\footnote{\text{A simulator state}}) \leftarrow$ Setup$^{\prime}$($1^{\lambda}$)
        \item $b \leftarrow \mathcal{A}^{\mathcal{O}^{\prime}_{IPE.KeyGen}(st, \cdot), \mathcal{O}^{\prime}_{IPE.Encrypt}(st, \cdot)}(\lambda)$
        \item output b
    \end{itemize}
\end{enumerate}
The oracles $\mathcal{O}_{IPE.KeyGen}(msk, \cdot)$ and $\mathcal{O}_{IPE.Encrypt}(msk,\cdot)$ represent the real key generation and encryption oracles of $\Pi_{ipe}$, while $\mathcal{O}^{\prime}_{IPE.KeyGen}(st, \cdot)$ and $\mathcal{O}^{\prime}_{IPE.Encrypt}(st,\cdot)$ represent the simulated stateful key generation and encryption oracles.
% \KK{$st$ does not show as (static) input to the oracles $\mathcal{O}^\prime$ in [14].}
\end{definition}
% \KK{I am not confident about Lemma 5.1. Consider the following adversarial strategy for the IND-Game:
% \begin{enumerate}
%     \item Query the encryption oracle with $w_0^\prime, w_1^\prime$ and obtain $CT_{b_1} = g_2^{(w_{b_1}, \gamma_1, 0)B^*}$, where $b_1\in \{0,1\}$
%     \item Query the encryption oracle (again) with $w_0^\prime, w_1^\prime$ and obtain $CT_{b_2}= g_2^{(w_{b_2}, \gamma_2, 0)B^*}$, where $b_2\in \{0,1\}$
%     \item Query the KeyGen oracle with $v_0^\prime, v_1^\prime$ and obtain $SK_{b_3}=g_1^{(v_{b_3}, 0, \delta_3)B}$, where $b_3\in \{0,1\}$
%     \item Compute the pairings of elements from items (1) and (3); and from items (2) and (3). If they are the same, then output a sequence $b_1, b_2, b_3$ with $b_1=b_2$, o.w with $b_1\ne b_2$.
% \end{enumerate}
% }
\begin{lemma}
\label{lem:sim-sec}
The modified function-hiding inner-product encryption scheme of Section~\ref{Sec:Modifications} is SIM-secure.
\end{lemma}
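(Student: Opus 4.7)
The plan is to adapt Kim et al.'s SIM-security proof for $\Pi_{ipe}$ to our modified scheme, keeping the same simulator blueprint while accommodating two structural differences: our scheme outputs a single group element per query (the $K_1, C_1$ components are dropped), and the randomness that Kim et al.~inject through the multiplicative randomizers $\alpha, \beta$ now resides in the last two padding coordinates $\delta, \gamma$ of the vectors $\mathbf{v}$ and $\mathbf{w}$. A first sanity check is that the zero-coordinate of $\mathbf{v}$ aligns with the $\gamma$-coordinate of $\mathbf{w}$, and the zero-coordinate of $\mathbf{w}$ with the $\delta$-coordinate of $\mathbf{v}$, so $\langle \mathbf{v}, \mathbf{w}\rangle = \langle \mathbf{v}', \mathbf{w}'\rangle$; hence the simulator's inner-product bookkeeping is unchanged relative to $\Pi_{ipe}$.

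I would first construct the simulator $\mathcal{S}'$. Setup$^{\prime}$ proceeds exactly as SJ.Setup, sampling $\mathbf{B} \xleftarrow{R} \mathbb{GL}_n(\mathbb{Z}_q)$ and storing $(\mathbf{B}, \mathbf{B}^{\star})$ in the state $st$. On a key-generation query, $\mathcal{O}'_{IPE.KeyGen}$ samples a random vector of the form $(\tilde{\mathbf{v}}', 0, \tilde{\delta})$ that is consistent with the inner products it has been handed for all previously simulated encryption queries, and outputs $g_1^{(\tilde{\mathbf{v}}', 0, \tilde{\delta}) \cdot \mathbf{B}}$; $\mathcal{O}'_{IPE.Encrypt}$ is defined symmetrically with vectors $(\tilde{\mathbf{w}}', \tilde{\gamma}, 0)$ and outputs in $\mathbb{G}_2$. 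This mirrors Kim et al.'s simulator restricted to the second component of their output pair, with $\tilde{\delta}, \tilde{\gamma}$ taking over the slack that their $\alpha, \beta$ previously supplied.

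I would then prove indistinguishability by a hybrid argument mirroring Kim et al.'s: a sequence of hybrids $H_0, H_1, \ldots, H_Q$ progressively replaces real oracle responses by simulated ones, reducing each step to the SXDH assumption in the asymmetric bilinear group. The main obstacle I expect is re-deriving the single-coordinate hiding lemma: Kim et al.~rely on $\alpha, \beta$ being freshly sampled multiplicative scalars that re-randomize entire output vectors uniformly, whereas our randomness enters additively and in only one coordinate. The resolution is that $\mathbf{B}$ is a uniform secret element of $\mathbb{GL}_n(\mathbb{Z}_q)$, so the linear transformation $\mathbf{v} \mapsto \mathbf{v} \cdot \mathbf{B}$ spreads the randomness of the single coordinate $\tilde{\delta}$ across all coordinates of the output; concretely, the last row of $\mathbf{B}$ contributes a uniform random vector (in the exponent) that blinds the $\tilde{\mathbf{v}}'$-part, giving the same hiding guarantee Kim et al. obtain from $\alpha$, and symmetrically for $\tilde{\gamma}$ with $\mathbf{B}^{\star}$. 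Once this single lemma is established, the remainder of the hybrid transitions carry over from Kim et al.~essentially unchanged, since the dropped $K_1, C_1$ components only reduce the adversary's view.
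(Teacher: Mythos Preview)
Your proposal diverges from the paper in a way that introduces a genuine gap. Kim et al.'s SIM-security proof for $\Pi_{ipe}$ is carried out in the \emph{generic bilinear group model}, not via a hybrid reduction to SXDH. The simulator there does not sample an actual $\mathbf{B}$ and output real group elements; it hands out fresh \emph{handles}, maintains a table mapping handles to formal polynomials in the variables of $\mathbf{B}$, $\alpha$, $\beta$, and answers zero-test queries by analyzing whether the adversary's polynomial is identically zero. The paper's proof of this lemma follows that template and argues only two things: (i) the substitutes $\delta,\gamma_1$ for $\alpha,\beta$ still let the simulator issue fresh handles whose distribution matches the real experiment, and (ii) dropping $K_1,C_1$ removes variables from the set $\mathcal{R}$ of formal variables rather than adding any, so the zero-test analysis (which hinges on properties of $\det(\mathbf{B})$ and the random entries of $\mathbf{B}$) goes through unchanged.

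Your plan instead attempts a computational hybrid reduction and leans on an information-theoretic ``spreading'' claim: that the last row of $\mathbf{B}$, scaled by a fresh $\delta$, blinds $\mathbf{v}'\cdot\mathbf{B}$ in the same way Kim et al.'s multiplicative $\alpha$ does. That claim does not hold as stated. The matrix $\mathbf{B}$ is sampled once at setup, so across many key-generation queries the masking term is always a scalar multiple of the \emph{same} fixed row; the outputs all lie on a single affine line (in the exponent), which is far weaker than the per-query full re-randomization that $\alpha$ provides and is not by itself enough to run an SXDH hybrid. More fundamentally, Kim et al.'s construction is not known to be SIM-secure from SXDH alone, so ``the remainder of the hybrid transitions carry over from Kim et al.\ essentially unchanged'' is not a step you can invoke---there are no such SXDH hybrids in their proof to carry over. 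If you want to stay close to the paper, recast your argument in the generic-group framework: show that the modified handles are still fresh (your $\delta,\gamma$ observation suffices for this in that model) and that no new formal monomials are available to the adversary after dropping $K_1,C_1$.
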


Kim et al.~\citep{InnerProductEnc} provide full details of achieving security through a simulation-based proof in a generic model of bilinear groups.
For space restrictions, we skip repeating the full detailed proof here, and provide their high-level idea instead.
We then discuss that the modifications we introduced (Section \ref{Sec:Modifications}) to their scheme does not compromise its security, hence the same security argument holds for our scheme as well.

To prove SIM-security of $\Pi_{ipe}$, Kim et al.~construct a generic bilinear group simulator $\mathcal{S}$ that interacts with the adversary $\mathcal{A}$, such that the distribution of responses in the real scheme is computationally indistinguishable from that in the ideal scheme. 
% The adversary can construct terms that refer to new polynomials by making oracle queries for some generic group operations, then make a query to a ``zero-test" oracle. The result of the query is ``zero'' if the adversary's polynomial evaluates to zero when its variables are instantiated with values as generated in the real security game.
The simulator $\mathcal{S}$ must respond to the key generation and encryption queries as well as the generic bilinear group operation queries.
For each key generation and encryption query, the simulator responds with a fresh handle corresponding to each group  element in the secret key and the ciphertext. 
Similarly, for each generic group oracle query, the simulator responds with a fresh handle for the resulting group element.
The simulator maintains a table that maps handles to the formal polynomials the adversary forms via its queries.
Hence, each oracle query is regarded as referring to a formal query polynomial.
Two sets of formal variables are defined for $\Pi_{ipe}$, namely sets $\mathcal{R}$ and $\mathcal{T}$, with the universe $\mathcal{U}$ being the union of the two.
All the formal polynomials the adversary submits to the final test (zero-test) oracle are expressible in the formal variables in $\mathcal{R}$.
To answer the zero-test queries, the simulator performs a series of substitutions to re-express the adversary's query polynomials as a polynomial over the formal variables in $\mathcal{T}$.
The major challenge in the simulation is in answering the zero-test queries.
To consistently answer each zero-test query, the simulator first looks up the corresponding formal polynomial in its table and decomposes it into a “canonical” form, that is, as a sum of “honest”and “dishonest” components.
The honest components correspond to a proper evaluation of the inner product while the dishonest components include any remaining terms after the valid inner product relations have been factored out.
Kim et al.~argue, using properties of determinants, that if a query polynomial contains a dishonest component, then the resulting polynomial cannot be the identically zero polynomial over the formal variables corresponding to the randomly sampled elements in \textbf{B}.
Then, the simulator can correctly (with overwhelming probability) output “nonzero” in these cases.
Finally, in the ideal experiment, the simulator is given the value of the inner product between each pair of vectors the adversary submits to the key generation and encryption oracles, so it can make the corresponding substitutions for the honest inner product relations and thus, correctly simulate the outputs of the zero-test oracle. 

\begin{proof}
Being built on the scheme $\Pi_{ipe}$, the security of our scheme results from that of $\Pi_{ipe}$.
To prove the SIM-security of our Scheme in Section~\ref{Sec:Modifications}, we need to show that the modifications we introduced to the SIM-secure scheme $\Pi_{ipe}$, do not affect its security proof. 
\begin{enumerate}
    \item Changing the randomness from $\alpha$ and $\beta$, to $\delta$ and $\gamma_1$ in the input vectors \textbf{v} and \textbf{w} respectively.\\
    The simulator $\mathcal{S}$ should satisfy the following two conditions for corrections: i) $\mathcal{S}$'s response to the key generation, encryption, and group oracle queries made by the adversary $\mathcal{A}$ should be distributed identically as in the real experiment, and ii) $\mathcal{S}$ should correctly simulate the response to the zero-test queries made by the adversary $\mathcal{A}$. 
    While the latter is guaranteed by the properties of determinants and randomly sampled elements in the matrix \textbf{B}, the former is assured by the randomness of $\alpha$ and $\beta$.
    Our substitutes for the randomness provided by $\alpha$ and $\beta$, namely $\delta$ and $\gamma_1$, allow generating fresh handles in the simulation and consequently uniform and identical distributions as in the real experiment as per the first condition.
    The second condition however, is not affected by this modification, since it is satisfied by the properties of the matrix \textbf{B}, regardless of the values of $\alpha$ and $\beta$.
    Hence, replacing $\alpha$ and $\beta$ with ``1'', and including the randomness in the input vector does not affect the satisfaction of this condition. 
    \item Eliminating the first item of the pair for each of the followings: secret key, ciphertext, and decrypted value.\\
    We mentioned earlier in this section that all the formal polynomials the adversary submits to the final test (zero-test) oracle are expressible over the variables in the set $\mathcal{R}$.
    Kim et al.~argue that for any polynomial formed over these variables by the adversary, the simulator can correctly simulate the responses to the zero-test queries.
    Therefore, as long as new variables are not introduced to $\mathcal{R}$, this proof holds.
    Considering the elements in $\mathcal{R}$ (Definition 3.2 in \citep{InnerProductEnc}), eliminating the first item of the pair in the secret key, ciphertext and the decrypted value, does not change these formal variables, not to mention that it cannot introduce new variables to $\mathcal{R}$.
    Hence, it cannot enable the adversary to submit a ``dishonest'' component in the polynomial that outputs to zero in the zero-test to compromise the simulator.
\end{enumerate}
\end{proof}

Since our inner-product encryption scheme is SIM-secure we can replace its decryption keys and ciphertexts by outputs from the simulator and the two traces will be computationally indistinguishable as long as the decrypted plaintexts match.

\subsection{Secure Join Encryption}

We define security of our join encryption scheme following the methodology for symmetric searchable encryption (SSE) schemes by Curtmola et al.~\cite{Curtmola}.
Let $\lambda$ be the security parameter of the encryption schemes.
Let $H = \{ q_1, \ldots, q_\mu \}$ be a sequence of join queries where $\mu = \mathsf{poly}(\lambda)$.
Let $\sigma(q_i)$ be the result of the equi-join query $q_i$, i.e., $\sigma(q_i) = \{ (r_1, r'_1), \ldots, (r_\nu, r'_\nu)\}$ is the set of equality pairs between rows $r_{i,j}$ and $r'_{i,j}$ (which can be from the same table).
We define the trace $\tau(H) = \{ n, m, \sigma(q_1), \ldots, \sigma(q_\nu) \}$.

\begin{definition}{(SIM-Security for Join Encryption)}
\label{def:SIMSec_SJ}
We say a Join Encryption is SIM-secure, if there exists a simulator $\mathcal{S}(\tau(H))$ that given the trace $\tau(H)$ such that the following two experiments are computationally indistinguishable in the security parameter $\lambda$:
\begin{enumerate}
    \item $Real_{\mathcal{A}}(1^{\lambda})$
    \begin{itemize}
        \item $(pp, msk) \leftarrow$ IPE.Setup($1^{\lambda}$)
        \item $b \leftarrow \mathcal{A}(VIEW_{DBMS}(H))$
        \item output b %\KK{What is $b$ exactly?}
    \end{itemize}
    \item $Sim_{\mathcal{A},\mathcal{S}}(1^{\lambda})$
    \begin{itemize}
        \item $(pp, st) \leftarrow$ Setup$^{\prime}$($1^{\lambda}$)
        \item $b \leftarrow \mathcal{A}(\mathcal{S}(\tau(H), st))$
        \item output b
    \end{itemize}
\end{enumerate}
\end{definition}

\begin{theorem}
\label{thm:sj-sec}
The scheme Secure Join = (SJ.Setup, SJ.TokenGen, SJ.Enc, SJ.Dec, SJ.Match) from Section~\ref{Sec:Scheme} is SIM-secure.
\end{theorem}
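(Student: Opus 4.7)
My plan is to build a simulator $\mathcal{S}_{SJ}$ for Secure Join that uses as a subroutine the IPE simulator $\mathcal{S}_{IPE}$ guaranteed by Lemma~\ref{lem:sim-sec}. The view $VIEW_{DBMS}(H)$ delivered to the adversary consists of the $2n$ row ciphertexts, the $2\mu$ query tokens, and whatever algebraic combinations (in particular the pairings $D_\tau^{r,i} = e(Tk_\tau^{(i)}, C_\tau^r)$) the adversary chooses to compute in the generic bilinear group. Since each ciphertext and each token is syntactically an output of the modified IPE, the entire view is expressible as an IPE transcript, and Lemma~\ref{lem:sim-sec} reduces the task to producing, for every token--ciphertext pair, an inner-product value that is distributed the same (up to negligible error) as in the real scheme.

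The construction proceeds in three steps. First, $\mathcal{S}_{SJ}$ runs the simulated $\mathsf{IPE.Setup}$ to obtain $(pp, st)$ and publishes $pp$. Second, using only the $n$ and $m$ fields of $\tau(H)$, it invokes $\mathcal{S}_{IPE}$ to fabricate $2n$ ciphertext handles representing the rows of $T_A$ and $T_B$. Third, for each query $q_i$ it fabricates two token handles via $\mathcal{S}_{IPE}$ and supplies $\mathcal{S}_{IPE}$ with inner products as follows: it computes the equivalence classes $E$ on the rows of $T_A \cup T_B$ induced by the transitive closure of $\sigma(q_i)$, samples one fresh $z_{i,E} \xleftarrow{R} \mathbb{Z}_q$ per class, and assigns this value as the inner product for every row in $E$; rows outside every class receive an independent uniform value. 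Cross-query transitive closures are captured automatically because the same ciphertext handles are reused across queries, so the adversary correlates classes precisely via row identity, exactly as in the real scheme.

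The main obstacle is arguing that the simulator-supplied inner products are indistinguishable from those arising in the real execution. The analysis splits on whether a row satisfies the full selection criterion of the query. For a satisfying row, the real inner product collapses to $k_i \cdot H(a_0^r)$, and because $k_i \xleftarrow{R} \mathbb{Z}_q \setminus \{0\}$ is fresh per query while $H$ is modeled as a random function, equalities among these values align exactly with the join-value equivalence classes within each query and are independent across queries. For a non-satisfying row, the real inner product carries the additive blinding term $\gamma_{\tau,2}^r \sum_i P_i(a_i^r)$, which by Lemma~\ref{lem:Schwartz-Zippel} applied to the nonzero polynomial, combined with the independent uniform $\gamma_{\tau,2}^r$, is statistically close to uniform on $\mathbb{Z}_q$. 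A union bound over the $\mathsf{poly}(\lambda)$-many token--ciphertext pairs shows that unintended collisions occur only with negligible probability.

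I would then conclude by a short hybrid argument: first swap the real IPE ciphertexts and tokens for $\mathcal{S}_{IPE}$'s outputs (justified by Lemma~\ref{lem:sim-sec}), then replace the genuine inner-product values by the simulator's random, class-consistent values (justified by the Schwartz--Zippel and random-function analysis above). Each hop changes the adversary's view by a negligible amount, and the final view depends only on $\tau(H)$, establishing computational indistinguishability of $Real_{\mathcal{A}}(1^\lambda)$ and $Sim_{\mathcal{A},\mathcal{S}_{SJ}}(1^\lambda)$ as required.
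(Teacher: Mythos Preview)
Your proposal is correct and follows essentially the same strategy as the paper: invoke the IPE simulator from Lemma~\ref{lem:sim-sec}, and then argue that the real and simulated inner-product values induce the same equality pattern except with negligible probability, using the freshness of the per-query key $k$, the random-oracle behaviour of $H$, Schwartz--Zippel for the selection polynomials, and a union bound over the $\mathsf{poly}(\lambda)$ many token--ciphertext pairs.

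The organisation differs in two places worth noting. First, the paper's simulator does not program inner products directly; it manufactures dummy \emph{plaintexts} (random join values consistent with each $\sigma(q_h)$, and selection-attribute values obtained by solving a linear system over~$\mathbb{Z}_q$) and then hands those to the IPE simulator. Your approach of feeding class-consistent random values $z_{i,E}$ straight to $\mathcal{S}_{IPE}$ is cleaner and avoids the Gaussian-elimination step, at no cost in generality. Second, the paper's indistinguishability argument is an exhaustive eight-case analysis of when $D=D'$ can hold (same/different query $\times$ same/different join value $\times$ selection satisfied/not), each case bounded separately by $O(1/q)$ or $O(t/q)$, followed by a birthday-style bound over all $\varepsilon\le 2\mu n$ decryptions. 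Your two-case split (satisfying vs.\ non-satisfying row) together with the hybrid hop achieves the same bound more compactly; the eight cases are implicitly folded into your observation that $k_i\cdot H(a_0^r)$ collides across queries or across distinct join values only with probability $O(1/q)$, and that the $\gamma_{\tau,2}^r$-blinded term is uniform whenever the selection polynomial is nonzero. Either presentation is adequate; yours is closer to the standard hybrid template, while the paper's case table makes the individual failure modes more explicit.
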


\begin{proof}
We construct the simulator $\mathcal{S}$ as follows:
Recall that $a^i_j$ and $b^i_j$ are the plaintext values in the encrypted tables.
We initialize all plaintext values to $\bot$.
Given a set of equality pairs $\sigma(q_h)$, we set all join values $a^i_0$ and $b^i_0$ of equality pairs to the same random values from $\mathbb{Z}_q$ preserving already set values not equal $\bot$.
All remaining join values of $\bot$ are replaced by random numbers.
We compute values for the selection attribute values $a^i_j$, $b^i_j$ and selection values $\phi_{j,i}$, $\psi_{j,i}$ by solving a linear system of equations for binary numbers %\KK{Can we state the linear system explicitly?}
-- one for each possible pair of attribute value (or selection value) and domain value from $[n]$.
This can be done in polynomial time using Gaussian elimination.
Finally, we encrypt all plaintexts for tables and generate keys for the queries using the simulator for inner-product encryption.
This results in random plaintexts with ciphertexts that produce the trace $\tau(H)$ as prescribed by the simulator.
Each query produces now decryption results $D = D^{\prime}$, if the join conditions (same query, same join values and selection criteria are satisfied) are fulfilled or random numbers otherwise.
The actual values of the query results cannot be obtained by the adversary, since they cannot break the discrete logarithm.

It remains to show that the real protocol also either produces the same decryption results $D$, $D'$ for join matches or random numbers, with overwhelming probability in $\lambda$ (note that $q = O(2^\lambda)$).
%The equality in any two different values of $D$, requires equality in the inner product of the corresponding \textbf{v}'s and \textbf{w}'s.
Without loss of generality, we simplify the representation and show by $D = e(g_1, g_2)^{det(B)kH(a_0)+P(a_1)}$ the decryption result for an arbitrary row from table $T_A$, with join value $a_0$ and selecting attribute $A_1$. $D$ is decrypted by a token generated by query key $k$ and polynomial $P(x)$. We show that if this $D$ equals a $D^{\prime}$ of the form $e(g_1, g_2)^{det(B)k^{\prime}H(b_0)+Q(b_1)}$ for an arbitrary row from $T_B$, decrypted 
 by a token generated by query key $k^{\prime}$ and polynomial $Q(y)$, then $D$ and $D^{\prime}$: i) belong to the same query, ii) have the same join value, and iii) satisfy the selection criteria. There are eight different cases to investigate based on the following conditions:

\begin{itemize}
    \item Satisfying/not-satisfying the selection criterion
    \item Belonging to the same/different query/ies
    \item Equality/non-equality of join values
\end{itemize}
\begin{claim}
    The equality $D=D^\prime$, holds with overwhelming probability if and only if all of the three conditions above are satisfied.
\end{claim}

In what follows, we show that this claim holds due to the randomness in: i) the symmetric key $k$, ii) the output of the hash function $H(\cdot)$, and iii) the coefficients of the polynomials, the probability of $D$ and $D^{\prime}$ taking the same value is negligible in the all cases except the case of belonging to the same query, having the same join value, and satisfying the selection criterion. In investigating the following cases we assume we are given a pair of decrypted values $(D, D^{\prime})$, where: 
    \begin{equation}
        \begin{aligned}
       & D = e(g_1, g_2)^{det(B)kH(a_0)+P(a_1)}, \\
       & D^{\prime} = e(g_1, g_2)^{det(B)k^{\prime}H(b_0)+Q(b_1)}.
        \end{aligned}
    \end{equation}

\begin{enumerate}
    \item Same query, same join values, both selection criterion hod,
    \begin{equation}
        Pr[D = D^{\prime}] = Pr[kH(a_0) = k^{\prime} H(b_0)] = 1.
    \end{equation}
    \item Same query, same join values, at least one of the selection criterion does not  hold, e.g.~the value of either of $a_1$ or $b_1$ is not included in the \emph{WHERE} clause,
    \begin{equation}
        \begin{aligned}
            \quad Pr[D = D^{\prime}] &= Pr[kH(a_0) + P(a_1) =k^{\prime} H(b_0) + Q(b_1)]
            \\&= Pr[P(a_1) = Q(b_1)] \leq \frac{t}{q}.
        \end{aligned}
    \end{equation}
    Recall from Section \ref{sec:polynomials} that how Lemma \ref{lem:Schwartz-Zippel} is applied to the polynomial encoding. The last inequality in the above equation results from applying Lemma \ref{lem:Schwartz-Zippel} to $P(x)-Q(b_1)$, which is a polynomial of degree $t$, and it evaluating to zero means the total probability of $P(a_1) = Q(b_1)$. This is an upper bound for the particular use case of $P(a_1) = Q(b_1)$. 
    \item Same query, different join values, both selection criterion hold, 
    \begin{equation}
        \begin{aligned}
            Pr[D = D^{\prime}] &= Pr[kH(a_0) = k^{\prime} H(b_0)] 
            \\&= Pr[H(a_0) = H(b_0)] = \frac{1}{q}.
        \end{aligned}
    \end{equation}
    $D$ and $D^{\prime}$ belonging to the same query results in $k=k^{\prime}$. Hence, we just need to calculate the probability of the hash output for join value $b_0$ colliding with that of $a_0$, while $a_0 \neq b_0$. As the cryptographic hash function $H(\cdot)$ acts as a random function with range $\mathbb{Z}_q$, this probability is $\frac{1}{q}$.
    \item Same query, different join values, at least one of the selection criterion does not hold,
    \begin{equation}
        \begin{aligned}
            Pr[D = D^{\prime}] &= Pr[kH(a_0) + P(a_1) =  k^{\prime}H(b_0) + Q(b_1)]
            \\&\leq \frac{t}{q}.
        \end{aligned}
    \end{equation}
    Similar to the item (2), the inequality above results from applying the Lemma \ref{lem:Schwartz-Zippel} to the polynomial $P(x) +kH(a_0) -  k^{\prime}H(b_0) - Q(b_1)$.
    \item Different queries, same join values, both selection criterion hold,
    \begin{equation}
        \begin{aligned}    
            Pr[D = D^{\prime}] &= Pr[kH(a_0) = k^{\prime}H(b_0)] \\
            &= Pr[k = k^{\prime}] + Pr[k  \neq k^{\prime}, H(a_0) = 0] \\
            &= \frac{1}{q-1} + \frac{q-2}{q-1} \times \frac{1}{q} =  \frac{2}{q}.
        \end{aligned}
    \end{equation}
    $D$ and $D^{\prime}$ correspond to equal join values, hence $H(a_0) = H(b_0)$. If $H(a_0)$ is not zero (which has probability $\frac{q-1}{q}$), the equality of $D$ and $D^{\prime}$ requires the equality of the query keys, i.e., $k = k^{\prime}$. Since $k$ and $k^{\prime}$ belong to different queries are chosen independently and randomly from $\mathbb{Z}_q\setminus\{0\}$, the probability of one of them taking the same value as the other one is $\frac{1}{q-1}$, resulting in the overall probability $\frac{2}{q}$ for $D=D^{\prime}$.
    \item Different queries, same join values, at least one of the selection criterion does not hold,
    \begin{equation}
        \begin{aligned} 
            Pr[D = D^{\prime}] &= Pr[kH(a_0) + P(a_1) = k^{\prime}H(b_0) + Q(b_1)]
            \\&\leq \frac{t}{q}.
        \end{aligned}
    \end{equation}    
    Similar to the items (2) and (4), the inequality above results from applying the Lemma \ref{lem:Schwartz-Zippel} to the polynomial $P(x) +kH(a_0) - k'H(b_0) - Q(b_1)$.    
    \item Different queries, different join values, both selection criterion hold, 
    % \KK{I believe this probability should be $\approx 1/\sqrt{q}$ (assuming that $H$ behaves like a random function, we can call birthday paradox).} \FK{I don't believe that's what the birthday paradox says.  Maybe we need to discuss this.} \KK{I am modeling here $F(z=x\mid y) = x\cdot H(y)$, assuming that $F$ behaves like a fandom function, and looking for a collision probability $F(z)=F(z^\prime)$, where $z=k\mid a_0$ and $z^\prime = k^\prime\mid b_0$.}
    \begin{equation}
        \begin{aligned} 
            Pr[D = D^{\prime}] = Pr[kH(a_0) = k^{\prime}H(b_0)] = \frac{1}{q}.
        \end{aligned}
    \end{equation}             
    We calculated the probability of $H(a_0)=H(b_0)$ when $a_0 \neq b_0$ in item 3, which is the probability of a $H(b_0)$ taking the same random value as the other independent value $H(a_0)$. This probability does not increase by multiplying these random values by other fixed or random values ($k$ and $k^{\prime}$ here). 
    \item Different queries, different join values, at least one of the selection criterion does not hold,
    \begin{equation}
        \begin{aligned} 
            Pr[D = D^{\prime}] &= Pr[kH(a_0) + P(a_1) = k^{\prime}H(b_0) + Q(b_1)] 
            \\&\leq \frac{t}{q}.
        \end{aligned}
    \end{equation} 
    Similar to the items (2), (4), and (6), the inequality above results from applying the Lemma \ref{lem:Schwartz-Zippel} to the polynomial $P(x) + kH(a_0) - k'H(b_0) - Q(b_1)$. 
\end{enumerate}
\textbf{From a given $\mathbf{(D, D^{\prime})}$ to any $\mathbf{(D, D^{\prime})}$ in the set of queries.}\\
Through items (1) - (8), we showed that for a given $(D, D^{\prime})$, the equality holds if and only if the corresponding rows to these values are decrypted through processing the same query, have the same join value, and satisfy the selection criteria in the query. The probability of the equality $D = D^{\prime}$ taking place for any other cases is in $O(\frac{1}{q})$. However, if we consider all the 
$\varepsilon \leq 2 \mu n$ 
ciphertext decryptions 
%pairs decrypted
by $\mu$ queries over the join tables, this probability can increase to  %$O(\frac{1}{\sqrt{q}})$,
$O(\frac{\varepsilon^2}{2q})$,
according to the birthday paradox. However, $O(\sqrt{q} = 2^{\lambda/2})$ is a loose upper bound for $\varepsilon$ in our scheme, since the number of queries and the number of database rows in our scheme is polynomial-sized in $\lambda$, i.e., 
%$\mu = poly(\lambda)$.
$\varepsilon = poly(\lambda)$.
\end{proof}

We can re-iterate our security properties as corollaries of Theorem~\ref{thm:sj-sec}.

\begin{corollary}{(Restricting leakage to the selection criterion)} \label{thm:SJSelect}
The Secure Join = (SJ.Setup, SJ.TokenGen, SJ.Enc, SJ.Dec, SJ.Match) restricts the leakage to the selection criterion.
\end{corollary}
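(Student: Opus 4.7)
The plan is to derive the corollary as a direct consequence of Theorem~\ref{thm:sj-sec}. First, I would make precise what ``restricting leakage to the selection criterion'' means formally: no efficient adversary, given $VIEW_{DBMS}(H)$, can extract non-negligible information about the equality of join-attribute values beyond what is contained in the sets $\sigma(q_i)$ of equality pairs produced by queries $q_i \in H$. Equivalently, for any two query histories $H_1, H_2$ such that $\tau(H_1) = \tau(H_2)$, the views $VIEW_{DBMS}(H_1)$ and $VIEW_{DBMS}(H_2)$ must be computationally indistinguishable.

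Next, I would observe that by the construction of the trace in Definition~\ref{def:SIMSec_SJ}, a pair $(r, r')$ appears in $\sigma(q_i)$ only when both the join condition $a_0^r = b_0^{r'}$ and the \emph{WHERE} clause of $q_i$ are satisfied by the corresponding rows. Tuples failing the selection criterion are absent from every $\sigma(q_i)$, so their equality status is not part of $\tau(H)$. Therefore, invoking the simulator $\mathcal{S}$ from Theorem~\ref{thm:sj-sec}, which receives $\tau(H)$ as its only input, yields a hybrid argument: for $H_1$ and $H_2$ with identical traces, $\mathcal{S}(\tau(H_1))$ and $\mathcal{S}(\tau(H_2))$ are identically distributed, and each is computationally indistinguishable from the corresponding real view. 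Chaining these three distributions gives the desired indistinguishability of $VIEW_{DBMS}(H_1)$ and $VIEW_{DBMS}(H_2)$, and hence the claim that leakage is bounded by what the selection filter admits.

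The main subtlety, and the principal obstacle I anticipate, is ensuring that the simulator's use of $\sigma(q_i)$ does not inadvertently reveal equalities of rows that fell outside the selection criterion. Specifically, one must verify that the construction of simulated plaintext attribute values in the proof of Theorem~\ref{thm:sj-sec} does not induce consistent equality patterns among those ``filtered out'' rows; since the simulator samples such values essentially independently at random, subject only to the constraints imposed by $\sigma$, no additional equality is induced except with probability negligible in $\lambda$ (via the same birthday-style counting used at the end of the proof of Theorem~\ref{thm:sj-sec}). With this observation in place, the corollary follows without further cryptographic machinery.
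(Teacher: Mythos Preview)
Your proposal is correct but takes a genuinely different route from the paper. The paper does not invoke Theorem~\ref{thm:sj-sec} as a black box; instead it gives a direct, concrete argument on the structure of the decrypted value: $D = e(g_1,g_2)^{\det(\mathbf{B})\,kH(a_0)+P(a_1)}$ reduces to the ``clean'' form $e(g_1,g_2)^{\det(\mathbf{B})\,kH(a_0)}$ if and only if $P(a_1)=0$, and by Lemma~\ref{lem:Schwartz-Zippel} this happens with only negligible probability when $a_1$ is not among the roots encoded in the \emph{WHERE} clause. In other words, the paper re-uses one branch of the case analysis from the proof of Theorem~\ref{thm:sj-sec} (essentially cases~(2), (4), (6), (8)) rather than appealing to the simulator.

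Your argument is the more modular of the two: you treat SIM-security as a single abstraction, formalize ``leakage restricted to the selection criterion'' as indistinguishability of views with equal traces, and obtain the corollary by a clean hybrid through $\mathcal{S}(\tau(H))$. This is arguably what a corollary of Theorem~\ref{thm:sj-sec} \emph{should} look like, and it would survive unchanged if the underlying construction were replaced by any other SIM-secure scheme with the same trace. The paper's approach, by contrast, is scheme-specific and shorter, but it partially duplicates reasoning already embedded in the main theorem's proof. The subtlety you flag about the simulator not inducing spurious equalities among filtered-out rows is real, and your resolution (fresh random values plus the birthday bound already used at the end of the proof of Theorem~\ref{thm:sj-sec}) is the right one.
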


The decryption algorithm, SJ.Dec($\cdot$), outputs a value of the form $D = e(g_1, g_2)^{det(B)kH(a_0)+P(a_1)}$. Hence, the server can obtain $D = e(g_1, g_2)^{det(B)kH(a_0)}$, if and only if $P(a_1)$ evaluate to zero. According to Lemma \ref{lem:Schwartz-Zippel} the probability of $P(a_1)$ evaluating to zero when the values of $a_1$ is not in the \emph{WHERE} clause, is negligible. Hence, Secure Join successfully restricts the leakage to the selection criterion.

\begin{corollary}{(Preventing Super-additive leakage)} \label{thm:SJLink}
The Secure Join = (SJ.Setup, SJ.TokenGen, SJ.Enc, SJ.Dec, SJ.Match) prevents super-additive leakage.
\end{corollary}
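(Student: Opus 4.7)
The plan is to derive this corollary as a direct consequence of Theorem~\ref{thm:sj-sec}. Recall that the SIM-security simulator $\mathcal{S}$ is given only the trace $\tau(H) = \{n, m, \sigma(q_1), \ldots, \sigma(q_\mu)\}$, i.e.\ the set of equality pairs revealed \emph{per query}. Since $\mathcal{S}(\tau(H))$ produces a view that is computationally indistinguishable from $\mathsf{VIEW}_{DBMS}(H)$, any information the adversary extracts from observing the DBMS across the full query sequence $H$ can also be extracted from $\tau(H)$ alone. An adversary can of course compute the transitive closure of $\bigcup_i \sigma(q_i)$ from $\tau(H)$, but by the indistinguishability guarantee nothing beyond this closure is learnable. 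First, therefore, I would state that non-super-additivity is exactly the property that every leaked equality pair lies in $\mathrm{TC}\bigl(\bigcup_i \sigma(q_i)\bigr)$, and invoke Theorem~\ref{thm:sj-sec} to reduce the corollary to examining what $\tau(H)$ contains.

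Next I would articulate the remaining technical worry: a priori, the adversary might try to \emph{link} two queries by noticing that a row decrypted under query $q_i$ produces the same value $D$ as a row decrypted under a different query $q_j$, even when neither $\sigma(q_i)$ nor $\sigma(q_j)$ exposes that match. This is precisely the super-additivity channel that the construction must block. I would handle this by pointing to cases (5)--(8) in the proof of Theorem~\ref{thm:sj-sec}: whenever $D$ and $D'$ come from distinct queries, they are computed with independently and uniformly sampled query keys $k, k' \in \mathbb{Z}_q \setminus \{0\}$, and the analysis there shows $\Pr[D = D'] \in O(1/q)$ regardless of whether the underlying join values or selection predicates coincide. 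Hence with overwhelming probability, the server cannot discover any cross-query equality by simple comparison of decryption outputs.

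Then I would aggregate over all ciphertext comparisons performed by the DBMS. There are at most $\varepsilon = \mathrm{poly}(\lambda)$ decryption outputs across $\mu$ queries and $2n$ rows, so by a union/birthday argument the total probability that any spurious cross-query collision occurs is $O(\varepsilon^2 / q) = \mathrm{negl}(\lambda)$ since $q = \Theta(2^\lambda)$. Consequently, the only equalities between decryption outputs that the adversary observes with non-negligible probability are those \emph{within} a single query $q_i$, which are precisely the pairs in $\sigma(q_i)$. The transitive closure the adversary can then compute across queries is exactly $\mathrm{TC}\bigl(\bigcup_i \sigma(q_i)\bigr)$, matching the prescribed leakage profile.

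The hardest part of this plan is the bridging step between the simulator-based guarantee of Theorem~\ref{thm:sj-sec} and the combinatorial claim about transitive closures: one must argue that the simulator's construction (which substitutes random join values consistent with $\tau(H)$) genuinely reflects a world in which cross-query information is absent, and that the real protocol behaves this way up to the negligible birthday-type slack. Once this correspondence is made explicit, the corollary follows immediately, since SIM-security rules out the existence of any efficient distinguisher that could extract additional equality information beyond $\mathrm{TC}\bigl(\bigcup_i \sigma(q_i)\bigr)$.
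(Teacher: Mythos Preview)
Your proposal is correct and follows essentially the same approach as the paper: the paper's own justification simply cites cases (5)--(8) of the proof of Theorem~\ref{thm:sj-sec} to conclude that cross-query matches occur with negligible probability, which is exactly the core of your argument. Your write-up is considerably more detailed---adding the SIM-security framing, the explicit birthday-bound aggregation, and the discussion of the simulator-to-real bridging---but the substantive reasoning coincides with the paper's terse two-sentence derivation.
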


Items (5) to (8) in the proof of Theorem \ref{thm:sj-sec} show that the probability that SJ.Match results in a match for different queries is negligible.
This protection holds even if both selection polynomials evaluate to zero, or even if the join values match.
Hence, Secure Join successfully prevents the adversary from linking the results of different queries.

\oddsidemargin -0.25in
\textwidth 7.0in         
\topmargin 0.0in
\headheight 0.0in
\headsep 0.0in
\topskip 0.0in
\footskip 0.4in
\textheight 8.8in         
\section{Experiments}\label{Sec:Experiments}
In this section, we evaluate our Secure Join scheme (introduced in Section \ref{Sec:Scheme}) in running secure hash joins over outsourced data. We provide three evaluation categories: i) a benchmark of the cryptographic operations in Secure Join, ii) server's performance in performing joins operation (decryption anch match) for various database sizes, and iii) server's performance in performing joins operation for various \emph{IN} clause sizes for a single attribute. 
\subsection{Setup} 
We ran our experiments using a single thread on a machine with four processors, a 64-bit Intel Core $i7-7500U @ 2.70GHz$ each, with $15.4 GiB$ RAM and running Ubuntu $20.04.01$.
In our experiments, we used an adjusted version of function-hiding inner product encryption implementation by Kim et al.~ \cite{InnerProductEnc} (described in Section \ref{Sec:FHIPE}). The adjustments were made to the implementation, so that it complies with our modifications from Section~\ref{Sec:Modifications}. In our experiments, we use the data provided by TPC-H benchmark. We in particular use the tables tables $T_A$ = \emph{Orders} and $T_B$ = \emph{Customers} from the benchmark. The table \emph{Orders}, has nine attribute values: \textit{(orderkey, custkey, orderstatus, totalprice, orderdate, orderpriority, clerk, shippriority, comment)}, while \emph{Customers} has eight: \textit{(custkey, name, address, nationkey, phone, acctbal, mktsegment, comment)}. The original tables \emph{Orders} and \emph{Customers} have $150,000$ and $1,500,000$ rows respectively. We use scale factors in the range of (0.01 - 0.1).
%to reduce the \emph{Customers} table's size to (1,500 - 15,000) and \emph{Orders}' to (15,000 - 150,000). 
The customer key information provides the join attribute \textit{custkey} in both tables. We also add another attribute column \textit{selectivity} to both tables. \textit{Selectivity} takes values $\{\frac{1}{12.5}, \frac{1}{25}, \frac{1}{50}, \frac{1}{100}\}$. We use these values for two purposes: i) providing some values for the attribute \textit{Selectivity}, ii) showing the proportion of the table assigned with this attribute value. Hence, each \textit{Selectivity} value $x$ is assigned to $x\times n$ rows where $n$ is the table size. 
%For example, the table \emph{Orders} with scale factor $0.1$ has $150,000$ rows among which $1500$ rows are assigned with $\frac{1}{100}$ as their \textit{Selectivity} value. 
For example, in a table \emph{Customers} with $150,000$ rows, $1,500$ rows are assigned the same attribute value with $\frac{1}{100}$ as their \textit{Selectivity}. 

\subsection{Crypto Operations in Secure Join}\label{Sec:Crypto}
Figure \ref{fig:Operations} shows a micro-benchmark of our implementation of the cryptographic operations in Secure Join for a single row. 
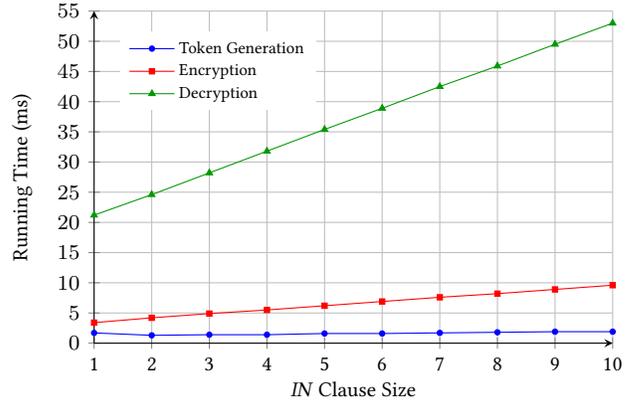
\begin{figure}[h]
    \centering
    \pgfplotsset{scaled x ticks=false}
    \pgfplotsset{scaled x ticks=false}
    \begin{tikzpicture}
    \begin{axis}[
        %title=Running Time vs. Scale Factor,
        width=1\linewidth,
        height=6cm,
        axis lines=left,
        grid=both,
        legend style={font=\small},
        legend cell align=left,
        legend style={at={(0.24, 0.70)}, anchor=south, fill=white, draw=none},
        label style={font=\small},
        legend style={nodes={scale=0.8, transform shape}},
        ticklabel style={font=\small},
        xlabel style={at={(0.5, 0.03)}},
        ylabel style={at={(0.04, 0.5)}},
        ymin=0,
        ymax=55, %%%%%%%%%%%%%%%%%%%%%%%%%%%%%%%%%%%%%%%%%%%%%%%%%%%% INCREASE THIS TO INCREASE Y-AXIS        
        ytick={0,5,10,...,55},
        xmin=1,
        xmax=10,
        xtick={1, 2, 3, 4, 5, 6, 7, 8, 9, 10},
        xticklabels={1, 2, 3, 4, 5, 6, 7, 8, 9, 10},
        xlabel=\emph{IN} Clause Size,
        ylabel=Running Time (ms),
        legend entries={{Token Generation},{Encryption},{Decryption}},
    ]
    \addplot[color=blue, mark=*, mark options={scale=0.5}] coordinates {
    (1, 1.7) (2, 1.3) (3, 1.4) (4, 1.4) (5, 1.6) (6, 1.6) (7, 1.7) (8, 1.8) (9, 1.9) (10, 1.9)};
    \addplot[color=red, mark=square*, mark options={scale=0.5}] coordinates {
    (1, 3.4) (2, 4.2) (3, 4.9) (4, 5.5) (5, 6.2) (6, 6.9) (7, 7.6) (8, 8.2) (9, 8.9) (10, 9.6)};        
    \addplot[color=black!40!green, mark=triangle*, mark options={scale=0.7}] coordinates {
    (1, 21.2) (2, 24.6) (3, 28.2) (4, 31.8) (5, 35.4) (6, 38.9) (7, 42.5) (8, 45.9) (9, 49.5) (10, 53)};
    \end{axis}
    \end{tikzpicture}
    \caption{Encryption operation benchmarks for a single row in table \emph{Customers} 
    %with scale factor 0.01, and selectivity $ \mathbf{\frac{1}{100}}$
    }
    \label{fig:Operations}
\end{figure}
This experiment provides the average implementation results for a row in the table \emph{Customers}, %with scale factor $0.01$ ($1500$ rows) and selectivity $\frac{1}{100}$, 
when the \emph{IN} clause size varies from 1 to 10. The cryptographic operations are SJ.Enc($\cdot$), SJ.TokenGen($\cdot$), and SJ.Dec($\cdot$) from Section \ref{Sec:Scheme}. The token generation algorithm does not show a noticeable change in runtime over different values for size of the \emph{IN} clause ($t$ in Section \ref{sec:polynomials}). The algorithm takes less than $2 ms$ to run for each value of $t$, since it only calculates a single value of $g_1^{\boldsymbol{\nu}_{\tau} \mathbf{B}}$, although increasing $t$ from 1 to 10 changes the non-zero values in the $\boldsymbol{\nu}$ vector from 2 to 11 respectively. The encryption algorithm takes $3.4 ms$ on average to encrypt a rows for $t=1$, this time increases linearly to $9.6 ms$ for $t=10$, since the algorithm calculates the $t$ powers for each attribute value in the table, pre-encryption. The most time consuming cryptographic operation in Secure Join is the decryption algorithm, which takes $21.2 ms$ to run for $t=1$ which increases to $53 ms$ for $t=10$.  
\subsection{Joins and Database Size}\label{Sec:ScaleFactor}
Figure \ref{fig:ScaleFac} shows the runtime of joins operation over the encrypted data on the server side, i.e.~SJ.Dec($\cdot$) and SJ.Match($\cdot$), for several database sizes and different \textit{Selectivity} values ($s$), when the join query consists of a single value in the $IN$ clause for each table. 
\begin{figure}[h]
    \centering
    \pgfplotsset{scaled x ticks=false}
    \pgfplotsset{scaled x ticks=false}
    \begin{tikzpicture}
    \begin{axis}[
        %title=Running Time vs. Scale Factor,
        width=1\linewidth,
        height=6cm,
        axis lines=left,
        grid=both,
        legend style={font=\small},
        legend cell align=left,
        legend style={at={(0.17, 0.62)}, anchor=south, fill=white, draw=none},
        label style={font=\small},
        legend style={nodes={scale=0.8, transform shape}},
        ticklabel style={font=\small},
        xlabel style={at={(0.5, 0.03)}},
        ylabel style={at={(0.04, 0.5)}},
        ymin=0,
        ymax=300, %%%%%%%%%%%%%%%%%%%%%%%%%%%%%%%%%%%%%%%%%%%%%%%%%%%% INCREASE THIS TO INCREASE Y-AXIS MAX
        ytick={0,25, 50,...,300},
        xmin=0.01,
        xmax=0.1,
        xtick={0.01, 0.02, 0.03, 0.04, 0.05, 0.06, 0.07, 0.08, 0.09, 0.1},
        xticklabels={0.01, 0.02, 0.03, 0.04, 0.05, 0.06, 0.07, 0.08, 0.09, 0.1},
        xlabel=TPC-H Scale Factor,
        ylabel=Running Time (s),
        legend entries={{$s = 1/100$}, {$s=1/50$}, {$s=1/25$}, {$s=1/12.5$}},
    ]
    \addplot[color=blue, mark=*, mark options={scale=0.5}] coordinates {
  (0.01, 3.52) (0.02, 7.06) (0.03, 10.52) (0.04, 14.16) (0.05, 17.60) (0.06, 21.25) (0.07, 24.40) (0.08, 28.31) (0.09, 31.82) (0.1, 35.34)};
    \addplot[color=red, mark=square*, mark options={scale=0.5}] coordinates {
  (0.01, 7.01) (0.02, 14.03) (0.03, 20.99) (0.04, 28.26) (0.05, 35.04) (0.06, 42.54) (0.07, 48.75) (0.08, 56.52) (0.09, 63.60) (0.1, 70.62)};
    \addplot[color=black!40!green, mark=triangle*,  mark options={scale=0.7}] coordinates {
  (0.01, 13.99) (0.02, 27.96) (0.03, 42.12) (0.04, 56.57) (0.05, 69.92) (0.06, 84.78) (0.07, 97.53) (0.08, 113.10) (0.09, 127.05) (0.1, 141.31)};
    \addplot[color=black!40!purple, mark=star, mark options={scale=0.8}] coordinates {
  (0.01, 27.88) (0.02, 55.82) (0.03, 83.90) (0.04, 112.96) (0.05, 139.89) (0.06, 169.21) (0.07, 194.99) (0.08, 225.97) (0.09, 254.46) (0.1, 282.49)};
    \end{axis}
    \end{tikzpicture}
    \caption{Joins runtime for various scale factors, single \emph{IN} clause}
    \label{fig:ScaleFac}
\end{figure}
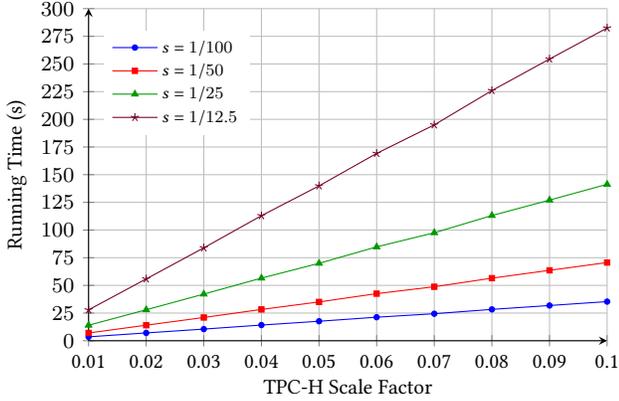
As expected, the runtime of performing the joins operation increases linearly with the database size. This increase however, is more noticeable for higher values of \textit{Selectivity}. When the \textit{Selectivity} value, shown by $s$ in Figure \ref{fig:ScaleFac} is $\frac{1}{100}$, the server takes $3.52 s$ to run joins over tables \emph{Orders} and \emph{Customers} with scale factor $0.01$ and $35.34 s$ to do the same for the tables with scale factor $0.1$. However, when $s=\frac{1}{12.5}$, the server takes $27.88 s$ to run joins over \emph{Orders} and \emph{Customers} with scale factor $0.01$ and $282.49 s$ to do so for the tables with scale factor $0.1$. 

\subsection{Joins and IN-Clause Size}\label{Sec:IN-Clause}
Figure \ref{fig:IN_Size} shows the runtime of joins operation over the encrypted data on the server side, i.e.~SJ.Dec($\cdot$) and SJ.Match($\cdot$), for several sizes of \emph{IN} clause ($t$) and different \textit{Selectivity} values ($s$), when the scale factor for join tables \emph{Orders} and \emph{Customers} is $0.01$. 
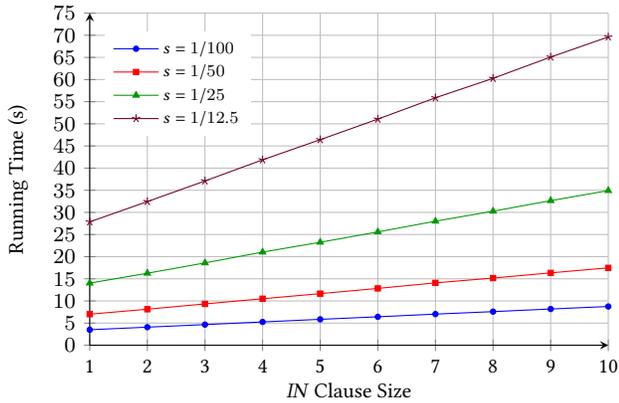
\begin{figure}[h]
    \centering
    \pgfplotsset{scaled x ticks=false}
    \begin{tikzpicture}
    \begin{axis}[
        %title=Running Time vs. IN Clause Max Size,
        width=1\linewidth,
        height=6cm,
        axis lines=left,
        grid=both,
        legend style={font=\small},
        legend cell align=left,
        legend style={at={(0.17, 0.63)}, anchor=south, fill=white, draw=none},
        label style={font=\small},
        legend style={nodes={scale=0.8, transform shape}},
        ticklabel style={font=\small},
        xlabel style={at={(0.5, 0.03)}},
        ylabel style={at={(0.04, 0.5)}},
        ymin=0,
        ymax=75, %%%%%%%%%%%%%%%%%%%%%%%%%%%%%%%%%%%%%%%%%%%%%%%%%%%% INCREASE THIS TO INCREASE Y-AXIS MAX
        ytick={0,5, 10,...,75},
        xmin=1,
        xmax=10,
        xtick={1, 2, 3, 4, 5, 6, 7, 8, 9, 10},
        xticklabels={1, 2, 3, 4, 5, 6, 7, 8, 9, 10},
        xlabel=\emph{IN} Clause Size,
        ylabel=Running Time (s),
        legend entries={{$s = 1/100$}, {$s=1/50$}, {$s=1/25$}, {$s=1/12.5$}},
    ]
    %\addplot[mark=*, mark options={scale=0.5}, error bars/.cd, y dir=both, y explicit, error bar style={line width=1pt,solid},
    %  error mark options={line width=1pt,mark size=1pt, rotate=90}] table [x=x, y=y, y error=y-err]{%
    \addplot[color=blue, mark=*, mark options={scale=0.5}] coordinates {
  (1, 3.50) (2, 4.08) (3, 4.66) (4, 5.26) (5, 5.85) (6, 6.42) (7, 7.03) (8, 7.59) (9, 8.18) (10, 8.75)};
    \addplot[color=red, mark=square*, mark options={scale=0.5}] coordinates {
  (1, 7.02) (2, 8.13) (3, 9.33) (4, 10.50) (5, 11.65) (6, 12.83) (7, 14.08) (8, 15.17) (9, 16.34) (10, 17.47)};
    \addplot[color=black!40!green, mark=triangle*, mark options={scale=0.7}] coordinates {
  (1, 14.03) (2, 16.26) (3, 18.59) (4, 21.03) (5, 23.25) (6, 25.59) (7, 28.01) (8, 30.27) (9, 32.65) (10, 34.92)};
    \addplot[color=black!40!purple, mark=star, mark options={scale=0.8}] coordinates {
  (1, 27.86) (2, 32.44) (3, 37.10) (4, 41.87) (5, 46.42) (6, 51.07) (7, 55.88) (8, 60.27) (9, 65.08) (10, 69.62)};
    \end{axis}
    \end{tikzpicture}
    \caption{Joins runtime for \emph{IN} clause with various sizes, scale factor: 0.01}
    \label{fig:IN_Size}
\end{figure}
As depicted in the plots in Figure \ref{fig:IN_Size}, increasing $t$, results in a longer runtime for Joins. This increase, however taking place in all \textit{Selectivity} experiments, is more noticeable the value of $s$ is larger. It takes the server $3.50 s$ to run the joins operation over the tables \emph{Orders} and \emph{Customers} for $t=1$ and $8.75 s$ to do the same for $t=10$, when $s=\frac{1}{100}$. The corresponding running time for $s=\frac{1}{12.5}$ are $27.86 s$ and $69.62 s$.      

We ran each of the experiments in Sections \ref{Sec:ScaleFactor} and \ref{Sec:IN-Clause}, 25 times, aiming to demonstrate the data in the 95\% confidence interval. However, the deviations from the mean value in our results were of order $10^{(-2)}$, resulting in unobservable error bars in our plots in Figures \ref{fig:ScaleFac} and \ref{fig:IN_Size}. 

\subsection{Comparison and Discussion}

We mentioned in Section \ref{Sec:Intro} that a state-of-the-art encryption scheme for joins by Hahn et al.~\cite{Joins} reduces the leakage to only leaking the equality condition for tuples that match a selection criterion. However, their scheme: i) requires nested-loop joins (with time complexity $O(n^2)$), ii) only works for primary key, foreign key joins, and iii) still results in a super-additive leakage, as we showed in Section \ref{ex:problem}. Our new encryption scheme for joins prevents this super-additive leakage, is not limited to primary key, foreign key joins, and can run hash joins (with expected time complexity $O(n)$). 

It is challenging to provide a one-to-one comparison of our performance measurement experiments with those in \cite{Joins}, due to: i) the differences in the parameters (the number of rows and attribute values) of the join tables, ii) the unclarity of join query parameters such as \emph{IN} clause size and selectivity in \cite{Joins}, and iii) different hardware used to perform the experiments.
%structure requirement (primary key, foreign key). 
Hence, we provide approximate performance comparisons. Their experiments \cite{Joins} report an average runtime of $15$ seconds for $1000$ average decrypted values, i.e., $15 ms$ per decryption. Our results in Figure \ref{fig:Operations} show an average time of $21 ms$ for one decryption operation (for \emph{IN} clause size of one). The results in \cite{Joins} also report the average of $6 s$ for a join operation over tables \emph{Part} ($20,000$ rows and 6 attribute values) and \emph{LineItem} ($6,000,000$ rows and 8 attribute values) from TPC-H, with scale factor $0.1$. Our experiments show a result of $35 s$ over tables \emph{Orders} and \emph{Customers} with scale factor 0.1 (and selectivity $\frac{1}{100}$).  In conclusions, our performance is already on the same order of magnitude even without any parallelization at better security.

Furthermore, the experiments in \cite{Joins} benefit from performance improvement provided by parallelizing each join query over 32 cores. The authors also re-use the decrypted information of the earlier join queries in the later ones to boost performance further. While our experiments are not intended to reuse the decrypted information due to the stronger security objectives of our scheme, they can as well benefit from parallelizing over several cores, instead of running on just one (the current setup).
\section{Related Work}
\label{Sec:RelatedWork}

In this paper we consider non-interactive equi-joins over encrypted data in a single-client, single-server setting.
This model is known as the database-as-a-service model~\cite{hacigumus2002providing}.
We provide the history of join encryption schemes \cite{Joins,CryptDB,AnalQoED,SQLoED} that address security in this model in Section~\ref{Sec:Problem}.
All of these schemes use a deterministic or searchable encryption scheme as its basic building block and then aim to reduce the leakage of the join pattern, i.e., the number of equality pairs revealed.
We argue that our join encryption scheme proposed in this paper leaks the least information in this setting and represents a natural lower bound of the necessary leakage in the two table setting.

CryptDB~\cite{CryptDB} also introduced the concept of re-encryption as a method to reduce leakage in the multiple (more than two) table setting.
The idea of re-encryption is that each table is encrypted with a different key and tables are re-encrypted to joint keys on a join operation.
Kerschbaum et al.~\cite{kerschbaum2013optimal} introduce an algorithm that optimizes the selection of the joint key.
Mironov et al.~ \cite{TCC17} present a new encryption scheme that makes the re-encryption uni-directional and hence prevents linking non-matched rows in a group of joins.
Note that our proposed encryption scheme uses a fresh key in each query and hence does not need to resort to any of these techniques in order to achieve their (and stronger) security properties.

Pang and Ding \cite{pang2014privacy} present a secret-key encryption scheme that avoids self-joins and Wang and Pang \cite{bilinearjoins} extend it to a public-key encryption scheme.
Carbunar and Sion \cite{bloomfilterjoin} use Bloom filters to achieve a similar security guarantee, but also have to manage false positives at the client due to the properties of Bloom filters.
All three schemes either reveal the equality pairs of the entire columns or none, just as CryptDB.
Hahn et al.~\cite{Joins} and now our scheme improve over this by only revealing the equality pairs for rows matching a selection criterion.

Many join algorithms can be parallelized, and Bultel et al.~\cite{bultel2018secure} process joins over encrypted data in a map-reduce cluster.
Such a parallelization is also applicable to our encryption scheme and could further speed up join operations over large data, although our performance is already competitive to the state-of-the-art at better security.

Interactive schemes, e.g., secure multi-party computation, fully homomorphic encryption with intermediate decryption or secure hardware, can also implement joins over encrypted data.
These schemes need to implement a circuit, i.e., an algorithm whose instructions and data accesses are independent of the input data.
Agrawal et al.~\cite{sovereignjoins} were the first to introduce this problem as sovereign joins.
Arasu and Kaushik \cite{obliviousquery} present the first, non-trivial, secure algorithms, but they are still too complicated to be practically implemented.
Krastnikov et al.~\cite{simeon} present the first non-trivial, secure and practical algorithms.

Joins can also be consider between datasets from multiple parties.
Mykletun and Tsudik \cite{mykletun2006security} are the first to point out there is inherent threat of collusion between one of the (two) parties and the database provider which cannot be fully avoided.
Hang et al.~\cite{enki} present a key management scheme for multiple parties in this setting.
Kantarcioglu et al.~\cite{kantarcioglu2009formal} develop an anonymization scheme that can prevent some of the leakage while maintaining efficiency.

A cryptographic technique to match datasets from two parties is private set intersection (PSI).
PSI was introduced by Fagin et al.~\cite{fagin1996comparing} and formally developed by Freedman et al.~\cite{freedman2004efficient}.
It is practically deployed by Google and Mastercard \cite{ion2020deploying}.
PSI protocols using a service provider exist \cite{kerschbaum2012outsourced,abadi2015psi,kerschbaum2012collusion}.
However, PSI cannot be applied to equi-join, since a prerequisite for PSI is that each data element in each set is unique.
The security of most PSI protocols deteriorates to the all-or-nothing disclosure level of CryptDB when elements can be replicated as in an equi-join over two database tables.

\section{Conclusion}
\label{Sec:Conclusions}

In this paper we present a new join encryption scheme that prevents additional leakage from a series of queries.
We compare our scheme to the state-of-the-art join encryption schemes and it achieves comparable performance even without parallelization, better scalability (due to hash joins instead of nested loop joins) and reduced leakage over series of queries, i.e., better security.
We provide a formal security proof and evaluate an implementation over a dataset from the TPC-H benchmark.
We claim our construction achieves a natural lower bound for the leakage in an efficient, non-interactive, single-server setting.
Hence, future work can investigate which restrictions to remove in order to further reduce the leakage of join encryption schemes.
\bibliographystyle{plain}
\bibliography{references}
\end{document}